


\documentclass[twocolumn]{article}
\usepackage{fullpage}

\usepackage{times}
\usepackage{soul}
\usepackage{url}
\usepackage[hidelinks,colorlinks=true,linkcolor=black]{hyperref}
\usepackage[utf8]{inputenc}
\usepackage[small]{caption}
\usepackage{graphicx}
\usepackage{amsmath}
\usepackage{amsthm}
\usepackage{booktabs}
\usepackage{algorithm}
\usepackage{algorithmic}
\urlstyle{same}
\usepackage{natbib}


\newtheorem{example}{Example}
\newtheorem{theorem}{Theorem}



\title{Maximizing Welfare with Incentive-Aware Evaluation Mechanisms\thanks{Published in IJCAI 2020.}}

\usepackage{authblk}
\author[1]{Nika Haghtalab}
\author[2]{Nicole Immorlica}
\author[2]{Brendan Lucier}
\author[1]{Jack Z. Wang}
\affil[1]{Cornell University, (\textit{nika,jackzwang@cs.cornell.edu})}
\affil[2]{Microsoft Research, (\textit{nicimm,brlucier@microsoft.com})}

\date{}

\usepackage{hyperref}
\hypersetup{
     colorlinks   = true,
     urlcolor    = black,
	 citecolor = black
}

\usepackage{amssymb,amssymb,wrapfig}
\usepackage[dvipsnames]{xcolor}
\usepackage{xspace}
\usepackage{nicefrac}
\usepackage{enumitem}
\usepackage{wrapfig}
\frenchspacing

\newtheorem{lemma}{Lemma}

\newtheorem{prop}{Proposition}
\newtheorem{definition}{Definition}

\newcommand{\comment}[1]{}

\DeclareMathOperator*{\argmax}{argmax\,}
\DeclareMathOperator*{\E}{\mathbb{E}}
\newcommand{\R}{\mathbb{R}}
\newcommand{\X}{\mathcal{X}}
\newcommand{\D}{\mathcal{D}}
\renewcommand{\P}{\mathcal{P}}
\newcommand{\K}{\mathcal{K}}

\newcommand{\poly}{\mathrm{poly}}
\newcommand{\sign}{\mathrm{sign}}
\newcommand{\ind}{\mathbf{1}}
\renewcommand{\vec}[1]{\mathbf{#1}}

\newcommand{\sden}{\mathrm{S}\text{-}\mathrm{Den}}
\newcommand{\mden}{\mathrm{Den}}

\newcommand{\cost}{\mathrm{cost}}
\newcommand{\G}{\mathcal{G}}
\newcommand{\F}{\mathcal{F}}
\newcommand{\Glin}{\mathcal{G}_{\text{lin}}}
\newcommand{\Gth}{\mathcal{G}_{\text{0-1}}}

\newcommand{\Val}{\mathrm{Val}}
\newcommand{\gain}{\mathrm{Gain}}

\newcommand{\opt}{\mathrm{opt}}

\newcommand\numberthis{\addtocounter{equation}{1}\tag{\theequation}}
\allowdisplaybreaks

\begin{document}
\maketitle

\begin{abstract}
Motivated by applications such as college admission and insurance rate determination, we propose an evaluation problem where the inputs are controlled by strategic individuals who can modify their features at a cost.  A learner can only partially observe the features, and aims to classify individuals with respect to a quality score.
The goal is to design an evaluation mechanism that maximizes the overall quality score, i.e., welfare, in the population, taking any strategic updating into account.

We further study the algorithmic aspect of finding the welfare maximizing evaluation mechanism under two specific settings in our model.
When scores are linear and mechanisms use linear scoring rules on the observable features, we show that the optimal evaluation mechanism is an appropriate projection of the quality score.  When mechanisms must use linear thresholds, we design a polynomial time algorithm with a (1/4)-approximation guarantee  when the underlying feature distribution is sufficiently smooth and admits an oracle for finding dense regions.  We extend our results to settings where the prior distribution is unknown and must be learned from samples.
\end{abstract}

\section{Introduction}

Automatic decision making is increasingly used to identify qualified individuals in areas such as education, hiring, and public health. This has inspired a line of work aimed at improving the performance and interpretability of classifiers for \emph{identifying qualification and excellence} within a society given access to limited visible attributes of individuals.
As these classifiers become widely deployed at a societal level, they can take on the additional role of \emph{defining excellence and qualification}. That is, classifiers encourage people who seek to be ``identified as qualified'' to acquire attributes that are ``deemed to be qualified'' by the classifier.  For example, a college admission policy that heavily relies on SAT scores 
will naturally encourage students to increase their SAT scores, which they might do by getting a better understanding of the core material, taking SAT prep lessons, cheating, etc.
 Progress in machine learning has not fully leveraged classifiers' role in incentivizing people to change their feature attributes, and at times has even considered it an inconvenience to the designer who
 must now take steps to ensure that their classifier cannot be ``gamed''~
\citep{meir2012algorithms,chen2018strategyproof,hardt2016strategic,dekel2010incentive,cai2015optimum}.  One of the motivations for such \emph{strategy-proof classification} is Goodhart's law, which states ``when a measure becomes a target, it ceases to be a good measure.''  Taking Goodhart's law to heart, one might view an individual's attributes to be immutable, and any strategic response to a classifier (changes in one's attributes)
only serves to mask an agent's true qualification and thereby degrade the usefulness of the classifier.

What this narrative does not address is that in many applications of classification,
one's qualifications can truly be improved by changing one's attributes.
For example, students who improve their understanding of core material  truly become more qualified for college education. These changes have the potential to raise the overall level of qualification and excellence in a society and should be encouraged by a social planner.
 In this work, we focus on this powerful and under-utilized role of classification in machine learning and ask how to
\begin{quote}
\emph{design an evaluation mechanism on visible features that incentivizes individuals to improve a desired quality.}
\end{quote}
For instance, in college admissions, the planner might wish to maximize the ``quality'' of candidates. Quality is a function of many features, such as persistence, creativity, GPA, past achievements, only a few of which may be directly observable by the planner. Nevertheless the planner designs an admission test on visible features to identify qualified  individuals. To pass the test, ideally candidates improve their features and truly increase their quality as a result.
In another motivating example, consider a designer who wants to increase average driver safety, which can depend on many features detailing every aspect of a driver's habits. The designer may only have access to a set of visible features such as age, driver training, or even driving statistics like acceleration/deceleration speed (as recorded by cars' GPS devices).  Then a scoring rule (that can affect a driver's insurance premium) attempts to estimate and abstract a notion of ``safety'' from these features.  Drivers naturally adjust their behavior to maximize their score.  In both cases, the mechanism does not just pick out high quality candidates or safe drivers in the underlying population, but it actually causes their distributions to change.

\paragraph{Modeling and Contributions.}
Motivated by these observations, we introduce the following general problem.
Agents are described by feature vectors in a high-dimensional feature space, and can change their innate features at a cost.  There is a \emph{true} function mapping feature vectors to a true quality score (binary or real-valued).
The planner observes a low-dimensional projection of agents' features and chooses an evaluation mechanism, from a fixed class of mechanisms which map this low-dimensional space to \emph{observed} scores (binary or real-valued).  Agents get value from having a high observed score (e.g.,  getting  admitted to university or having a low car insurance premium), whereas the planner wishes to maximize welfare, i.e., the average true score of the population.

To further study this model, we analyze the algorithmic aspects of maximizing welfare in two specific instantiations.
In Section~\ref{sec:linear-g}, we show that when  the true quality function is linear and the mechanism class is the set of all linear functions, the optimal is a projection of the true quality function on the visible subspace.
Our most interesting algorithmic results (Section~\ref{sec:threshold-g}), arise from the case when the true function is linear and mechanism class is the set of all linear thresholds. 
In this case, a simple projection does not work: we need to consider the distribution of agents (projected on the visible feature space) when choosing the mechanism. For this case, we provide polynomial time approximation algorithms for finding the optimal linear threshold.
In Section~\ref{sec:learn},  we also provide sample complexity guarantees for learning the optimal mechanism from samples only.

\paragraph{Prior Work.}
Our work builds upon the strategic machine learning literature introduce by \citet{hardt2016strategic}.
   As in our work, agents are represented by feature vectors which can be manipulated at a cost.
 \citet{hardt2016strategic} design optimal learning algorithms in the presence of these costly strategic manipulations.  \citet{HuIV19} and \citet{MilliMDH19} extend \citep{hardt2016strategic} by assuming different groups of agents have different costs of manipulation and study the disparate impact on their outcomes.  \citep{DongRSWW18} consider a setting in which the learner does not know the distribution of agents' features or costs but must learn them through revealed preference.  Importantly, in all these works, the manipulations do not change the underlying features of the agent and hence purely disadvantage the learning algorithm. \citet{KleinbergR19}
introduce a different model in which manipulations do change the underlying features.  Some changes are advantageous, and the designer chooses a rule that incentivizes these while discouraging disadvantageous changes.  Their main result is that simple linear mechanisms suffice for a single known agent (i.e., known initial feature vector).  In contrast, we study a population of agents with a known distribution of feature vectors and optimize over the class of linear, or linear threshold, mechanisms.  \citet{alon2019multiagent} also study extensions of \citet{KleinbergR19} to multiple agents. In that work, agents differ in how costly it is for them to manipulate their features but they all have the same starting feature representation, but in our work, agents differ in their starting features while facing the same manipulation cost.

As noted by \citet{KleinbergR19}, their model
is closely related to the field of contract design in economics. The canonical principal-agent model (see, e.g., \citep{GrossmanH83,Ross73}) involves a single principle and a single agent.  There is a single-dimensional output, say the crop yield of a farm, and the principal wishes to incentivize the agent to expend costly effort to maximize output.  However, the principle can only observe output, and the mapping of effort to output is noisy.  Under a variety of conditions, the optimal contract in such settings pays the agent an  function of output~\citep{Carroll15,DuettingRT19,HolstromM87}, although the optimal contract in general settings can be quite complex~\citep{McafeeM86}.  Our model differs from this canonical literature in that both effort and output are multi-dimensional.  In this regard, the work of \citet{HolstromM91} is closest to ours.  They also study a setting with a multi-dimensional feature space in which the principal observes only a low-dimensional representation.  Important differences include that they only study one type of agent whereas we allow agents to have different starting feature vectors, and they assume transferable utility whereas in our setting payments are implicit and do not reduce the utility of the principal.

\section{Preliminaries}
\label{sec:prelim}

\paragraph{The Model.}
We denote the true features of an individual by $\vec x\in \R^n$, where the feature space $\R^n$ encodes all relevant features of a candidate, e.g., health history, biometrics, vaccination
record, exercise and dietary habits.
We denote by $f:\R^n \rightarrow [0,1]$  the mapping from one's true features to their true quality. 
For example, a real-valued  $f(\vec x)\in [0,1]$ can express the overall quality of candidates and a binary-valued $f(\vec x)$ can determine whether $\vec x$ meets a certain qualification level.

These true features of an individual may not be visible to the designer.
Instead there is an $n\times n$ projection matrix $P$ of rank $k$, i.e., $P^2 = P$, such that for any $\vec x$, $P\vec x$ represents the visible representation of the individual, such as vaccination record and health
history, but not exercise and dietary habits.
We define by $\mathrm{Img}(P) = \{ \vec z\in \R^n \mid \vec z = P \vec z\}$ the set of all feature representations that are visible to the designer.
We denote by $g: \R^n \rightarrow \R$ a mechanism whose outcome for any individual $\vec x$ depends only on $P\vec x$, i.e., the visible features of $\vec x$.\footnote{
Equiv. $g(\vec x) = g^{||}(P\vec x)$ for unrestricted $g^{||}\!:\!\R^n\!\rightarrow\!\R$.
}
E.g., $g(\vec x) \in \R$ can express the payoff an individual receives from the mechanism, 
$g(\vec x)\in [0,1]$ can express the probability that $\vec x$ is accepted by a randomized mechanism $g$, or  a binary-valued 
$g(\vec x)\in \{0,1\}$ can express whether $\vec x$ is accepted or rejected  deterministically.

Let $\cost(\vec x, \vec x')$ represent the cost that an individual incurs when changing their features from $\vec x$ to $\vec x'$. We consider  $\cost(\vec x, \vec x') = c \|\vec x - \vec x'\|_2$ for some $c>0$. Given mechanism 
$g$, the total payoff $\vec x$ receives from changing its feature representation to $\vec x'$ is
$U_g(\vec x, \vec x') =  g(\vec x') - \cost(\vec x, \vec x').
$
Let $\delta_g:\R^n \rightarrow \R^n$ denote the best response of an individual to  $g$, i.e., 
$ \delta_g(\vec x) = \argmax_{\vec x'}  U_g(\vec x, \vec x').$
We consider a distribution $\D$ over feature vector in $\R^n$, representing individuals.
Our  goal is to design a mechanism $g$ such that, when individuals from  $\D$ best respond to it,
it yields the highest quality individuals on average. That is to find a mechanism $g\in\G$ that maximizes
\begin{equation}
\Val(g) =  \E_{\vec x \sim \D}\big[ f(\delta_g(\vec x)) \big].
\label{eq:exp-quality} 
\end{equation}
We often consider the gain in the quality of individuals compared to the average quality before deploying any mechanism, i.e., the baseline $\E[f(\vec x)]$, defined by 
\begin{equation}
\gain(g) = \Val(g) - \E_{\vec x \sim \D}[ f(\vec x)].
\label{eq:exp-gain} 
\end{equation}
We denote the mechanism with highest gain by  $g_\opt\in \G$.

For example, $f$ can indicate the overall health of an individual and $\G$  the set of governmental policies on how to set insurance premiums based on the observable features of individuals, e.g., setting lower premiums for those who received preventative care in the previous year. Then, $g_\opt$ corresponds to the policy that leads to the healthiest  society on average.
In Sections~\ref{sec:linear-g} and \ref{sec:threshold-g}, we work with different design choices for $f$  and $\G$, and show how to find (approximately) optimal mechanisms.

\paragraph{Types of Mechanisms.}

More specifically, we consider a known true quality function $f$ that is a \emph{linear function} $\vec w_f\cdot \vec x - b_f$ for some vector $\vec w_f \in \R^n$ and $b_f\in \R$. Without loss of generality we assume that the domain and cost multiplier $c$ are scaled so that $\vec w_f$ is a unit vector.

In Section~\ref{sec:linear-g}, we consider the class of \emph{linear mechanisms} $\Glin$, i.e., any $g\in \Glin$ is represented by $g(\vec x ) = \vec w_g \cdot \vec x - b_g$, for a scalar $b_g\in \R$ and  vector $\vec w_g\in \mathrm{Img}(P)$ of length $\| \vec w_g \|_2 \leq R$ for some $R\in \R^+$.
In Section~\ref{sec:threshold-g}, we consider the class of \emph{linear threshold (halfspace) mechanisms} $\Gth$, where a $g\in \Gth$ is represented by $g(\vec x) = \sign(\vec w_g \cdot \vec x - b_g)$  for some unit vector $\vec w_g\in \mathrm{Img}(P)$ and  scalar $b_g\in \R$.

\paragraph{Other Notation.}
$\|\cdot \|$ is the $L_2$ norm of a vector unless otherwise stated.
For $\ell \geq 0$, the \emph{$\ell$-margin density} of a halfspace  is
$ \mden^\ell_\D(\vec w, b) = \Pr_{\vec x \sim \D} \Big[ \vec w \cdot \vec x - b \in [-\ell, 0] \Big].
$
This is the total density $\D$ assigns to points that are at distance at most $\ell$ from being included in the positive side of the halfspace. 
The \emph{soft $\ell$-margin density} of this halfspace is defined as
\[ \sden^\ell_\D(\vec w, b) = 
  \E_{\vec x \sim \D} \Big[ (b - \vec w \cdot \vec x) \ind\big( \vec w \cdot \vec x - b \in [-\ell, 0] \big) \Big].
\]
We suppress $\ell$ and $\D$ when the context is clear.

We assume that individuals have features that are within a ball of radius $r$ of the origin, i.e., $\D$ is only supported on $\X = \{\vec x \mid \| \vec x\| \leq r \}$.
In Section~\ref{sec:threshold-g}, we work with distribution $\D$ that is additionally \emph{$\sigma$-smooth}.
$\D$ is $\sigma$-smoothed distribution if there is a corresponding distribution $\P$
over $\X$ such that to sample $\vec x'\sim \D$ one first samples $\vec x\sim \P$ and then $\vec x' = \vec x + N(0, \sigma^2 I)$. 
Smoothing is a common assumption in theory of computer science where 
$N(0, \sigma^2 I)$ models uncertainties in real-life measurements. 
To ensure that the noise in measurements is small compared to radius of the domain $r$, we assume that $\sigma \in O(r)$.

\section{Linear Mechanisms}
\label{sec:linear-g}

In this section, we show how the optimal linear mechanism in $\Glin$ is characterized by $g_\opt(\vec x) = \vec w_g \cdot \vec x$ for $\vec w_g$ that is (the largest vector) in the direction of $P \vec w_f$. 
This leads to an algorithm with $O(n)$ runtime for finding the optimal mechanism. 
At a high level, this result shows that when the true quality function $f$ is linear,  the optimal linear mechanism is in the direction of the closest vector to $\vec w_f$ in the visible feature space.  Indeed, this characterization extends to any true quality function that is a monotonic  transformation of a linear function.

\begin{theorem}[Linear Mechanisms]
\label{thm:linear}
Let  $f(\vec x) = h(\vec w_f \cdot \vec x - b_f)$ for some monotonic function $h: \R\rightarrow \R$.
Let $g(\vec x) = \frac{(P\vec w_f)R}{\|P\vec w_f\|_2}  \cdot \vec x$. Then, 
$g$ has the optimal gain.

\end{theorem}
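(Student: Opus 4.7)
The plan is to reduce the optimization over $g \in \Glin$ to a search over vectors $\vec w_g \in \mathrm{Img}(P)$ with $\|\vec w_g\|_2 \leq R$, and show that welfare is maximized by the unique choice that saturates the norm bound and points along $P\vec w_f$. I would first characterize the agent's best response to any linear mechanism $g(\vec x') = \vec w_g \cdot \vec x' - b_g$. The utility $U_g(\vec x, \vec x') = \vec w_g \cdot \vec x' - b_g - c \|\vec x' - \vec x\|_2$ depends on the displacement $\vec x' - \vec x$ only through the scalar $\vec w_g \cdot (\vec x' - \vec x)$ and its $L_2$ length, so a Cauchy--Schwarz rearrangement forces the best response to move purely along $\vec w_g$. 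Hence $\delta_g(\vec x) = \vec x + t \cdot \vec w_g / \|\vec w_g\|_2$ for some $t \geq 0$ that depends on $\|\vec w_g\|_2$ (and possibly on $\vec x$, if movement is limited to the domain $\X$) but not on the direction of $\vec w_g$, and that is weakly non-decreasing in $\|\vec w_g\|_2$. The offset $b_g$ shifts every agent's utility uniformly and drops out.

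Substituting into the welfare gives $\vec w_f \cdot \delta_g(\vec x) = \vec w_f \cdot \vec x + t \cdot (\vec w_f \cdot \vec w_g)/\|\vec w_g\|_2$. Since $h$ is monotone (WLOG non-decreasing; otherwise the mirror-image argument gives $\vec w_g$ along $-P\vec w_f$), pointwise monotonicity of the integrand reduces $\Val(g)$ to maximizing $t \cdot (\vec w_f \cdot \vec w_g)/\|\vec w_g\|_2$ for every $\vec x$. The dependence on $\vec w_g$ cleanly factorizes into a direction term $(\vec w_f \cdot \vec w_g)/\|\vec w_g\|_2$ and a magnitude term $\|\vec w_g\|_2$ (acting through $t$), which can be optimized independently.

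For the direction, I use that $P$ is an orthogonal projection, so $\vec w_f \cdot \vec u = P\vec w_f \cdot \vec u$ for every $\vec u \in \mathrm{Img}(P)$; Cauchy--Schwarz then selects the unit direction $\vec u = P\vec w_f / \|P\vec w_f\|_2$, with maximum value $\|P\vec w_f\|_2$. For the magnitude, since $t$ is weakly non-decreasing in $\|\vec w_g\|_2$ and the direction alignment is non-negative, we take $\|\vec w_g\|_2 = R$. Combining yields $\vec w_g = (R / \|P\vec w_f\|_2)\, P\vec w_f$, matching the claimed mechanism.

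The main technical wrinkle is that, under an $L_2$ cost paired with a linear mechanism, the best-response amount $t$ is actually discontinuous at $\|\vec w_g\|_2 = c$ (zero below, saturating at the boundary of $\X$ above). My argument deliberately sidesteps any explicit formula for $t$ and relies only on its monotonicity in $\|\vec w_g\|_2$ and its independence of the direction of $\vec w_g$, combined with monotonicity of $h$; this is precisely what makes the same optimal direction emerge uniformly across the different regimes of $(R, c)$ and what lets the result extend to any monotone transformation $h$ of the linear quality score.
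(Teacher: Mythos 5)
Your proof is correct and takes essentially the same route as the paper's: characterize the best response as movement only along $\vec w_g$ whose magnitude is monotone in $\|\vec w_g\|$ and independent of direction and of $b_g$, decouple direction from magnitude, pick the direction via Cauchy--Schwarz inside $\mathrm{Img}(P)$, saturate the norm bound, and invoke monotonicity of $h$ pointwise. One small point in your favor: you make explicit that $P$ must be an \emph{orthogonal} projection (so that $\vec w_f \cdot \vec u = P\vec w_f \cdot \vec u$ for $\vec u \in \mathrm{Img}(P)$), an assumption the paper's proof also silently relies on.
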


\comment{The proof of Theorem~\ref{thm:linear} follows from basic linear algebra and is included in Appendix~\ref{app:proof_linear} for completeness.
 But let us provide an intuitive explanation of this proof.
In response to mechanism $g(\vec x) = \vec w_g \cdot \vec x - b_g$, any movement of $\vec x$ that is orthogonal to $\vec w_g$ leaves the outcome of the mechanism unchanged but incurs a movement cost to the individual. Therefore, the best-response of any individual to $g$ is to move only in the direction of $\vec w_g$. Each unit of movement in the direction of $\vec w_g$ translates to $\vec w_g \cdot \vec w_f$ units of movement in the direction of $\vec w_f$ and increases the mechanism designer's utility by $\vec w_g \cdot \vec w_f$. So, the mechanism's payoff is maximized by the longest vector in the direction of $\vec w_f$ in the visible feature space.
\footnote{For general norms defined by a convex symmetric set $\mathcal{K}$, 
 $\vec w_g \in \mathcal{K}$ that maximizes $\vec w_g \cdot \vec w_f$ corresponds to the dual of $\vec w_f$ with respect to $\|\cdot \|_{\mathcal{K}}$}
} It is interesting to note that designing an optimal \emph{linear} mechanism (Theorem~\ref{thm:linear}) does not use any information about the distribution of instances in $\D$, rather directly projects $\vec w_f$ on the subspace of visible features.
We see in Section~\ref{sec:threshold-g} that this is not the case for \emph{linear threshold} mechanisms, where the distribution of feature attributes plays a central role in choosing the optimal mechanism.

\section{Linear Threshold Mechanisms} \label{sec:threshold-g}

In this section, we consider the class of linear threshold mechanisms $\Gth$ and explore the computational aspects of finding $g_\opt \in \Gth$.
Note that any $g(\vec x) = \sign (\vec w_g \cdot \vec x - b_g)$ corresponds to the transformation of a linear mechanism from $\Glin$ that  only rewards those whose quality passes a threshold. As in the case of linear mechanisms, individuals only move in the direction of $\vec w_g$ and every unit of their movement improves the payoff of the mechanism by $\vec w_g \cdot \vec w_f$. 
However, an individual's payoff from the mechanism improves only if its movement pushes the feature representation from the $0$-side of a linear threshold to $+1$-side.
Therefore only those individuals whose feature representations are close to the decision boundary move.

\begin{lemma} \label{lem:def_delta}
For $g(\vec x) = \sign(\vec w_g \cdot \vec x - b_g)\in \Gth$, 
\begin{align*}
\!\delta_g(\vec x)\!=\! \vec x - \ind\!\left(\! \vec w_g\cdot \vec x\! -\! b_g\! \in\! \left[-\frac 1c, 0\right] \right)\!(\vec w_g\cdot \vec x - b_g) \vec w_g.
\end{align*}
\end{lemma}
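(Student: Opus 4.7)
The plan is to establish the formula by a direct case analysis of the best-response optimization, exploiting the fact that $g$ only takes two values and that $\vec w_g$ is a unit vector.

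First, I would split on the sign of $\vec w_g \cdot \vec x - b_g$. If $\vec w_g \cdot \vec x - b_g \geq 0$, then $\vec x$ already receives the maximal score, so $U_g(\vec x, \vec x) = 1$, whereas any $\vec x' \neq \vec x$ yields at most $1 - c \|\vec x - \vec x'\|_2 < 1$, and moving to the negative side yields at most $0$. Hence $\delta_g(\vec x) = \vec x$, consistent with the indicator in the lemma being $0$ in this regime.

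Next, I would handle the nontrivial case $\vec w_g \cdot \vec x - b_g < 0$. Since $g$ is constant on each side of the boundary, the individual faces a binary choice: either stay on the negative side (best option there being $\vec x$ itself, with utility $0$) or move to some $\vec x'$ with $\vec w_g \cdot \vec x' \geq b_g$ and incur cost $c\|\vec x - \vec x'\|_2$. Conditional on crossing the boundary, the utility $1 - c\|\vec x - \vec x'\|_2$ is maximized by minimizing $\|\vec x - \vec x'\|_2$ subject to $\vec w_g \cdot \vec x' \geq b_g$. Because $\vec w_g$ is a unit vector, the Euclidean projection of $\vec x$ onto this halfspace is
\[
\vec x^\star = \vec x + (b_g - \vec w_g \cdot \vec x)\,\vec w_g = \vec x - (\vec w_g \cdot \vec x - b_g)\,\vec w_g,
\]
at distance $b_g - \vec w_g \cdot \vec x = -(\vec w_g \cdot \vec x - b_g) > 0$. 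This gives utility $1 - c\bigl(b_g - \vec w_g \cdot \vec x\bigr)$, which exceeds the stay-put utility $0$ exactly when $b_g - \vec w_g \cdot \vec x \leq 1/c$, i.e., $\vec w_g \cdot \vec x - b_g \in [-1/c, 0)$. In that case $\delta_g(\vec x) = \vec x^\star$, and otherwise $\delta_g(\vec x) = \vec x$.

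Finally, I would merge the cases into the single expression stated in the lemma, observing that on the boundary $\vec w_g \cdot \vec x - b_g = 0$ the second term vanishes, so including $0$ in the indicator interval is harmless, and that outside $[-1/c, 0]$ the indicator turns off and $\delta_g(\vec x) = \vec x$ as required. There is no real obstacle here beyond book-keeping; the only subtlety is recognizing that once the individual commits to crossing the boundary, the problem reduces to Euclidean projection onto a halfspace with unit normal, for which the closed-form displacement $-(\vec w_g \cdot \vec x - b_g)\vec w_g$ is immediate.
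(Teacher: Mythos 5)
Your proof is correct and follows essentially the same approach as the paper: compute the Euclidean projection of $\vec x$ onto the positive halfspace (using that $\vec w_g$ is a unit vector so the displacement is $-(\vec w_g\cdot\vec x - b_g)\vec w_g$), observe the crossing cost is $c(b_g - \vec w_g\cdot\vec x)$, and conclude the agent moves iff that cost does not exceed the unit reward, i.e., $\vec w_g\cdot\vec x - b_g \in [-1/c, 0]$. You are a bit more explicit than the paper in disposing of the already-accepted case and the boundary tie-breaking, but the core argument is identical.
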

This leads to very different dynamics and challenges compared to the linear mechanism case.
For example, $\vec w_g \propto P \vec w_f$ no longer leads to a good mechanism as it may only incentivize a small fraction of individuals as shown in the following example. 

\begin{example}
As in Figure~\ref{fig:bad-ex}, consider $\vec w_f =\left( \frac{1}{\sqrt{3}}, \frac{1}{\sqrt{3}}, \frac{1}{\sqrt{3}}  \right)$ and $P$ that projects a vector on its second and third coordinates. Consider a distribution $\D$ that consists of $\vec x = (0, 0, x_3)$ for $x_3 \sim Unif[-r, r]$. 
Any $g$ only incentivizes individuals who are at distance at most $\ell = 1/c$ from the the decision boundary, highlighted by the shaded regions.
Mechanism $g(\vec x)  = \sign( x_2 - \ell)$ incentivizes  everyone to move $\ell$ unit in the direction of $x_2$ and leads to total utility of $\E[f(\delta_g(\vec x))] = \ell/\sqrt{3}$. 
But, $g'(\vec x) = \sign(\vec w'_g \cdot \vec x - b'_g)$ for unit vector $\vec w_g' \propto P \vec w_f$  only incentivizes $\sqrt{2}\ell/2r$ fraction of the individuals, on average each moves only $\ell/2 \cdot \sqrt{2/3}$ units in the direction of $\vec w_f$. Therefore, $\gain(g') \leq \frac{\ell^2}{2r\sqrt{3}} \ll \gain(g)$ when unit cost $c \gg \frac 1r$. 
\end{example}

\begin{figure}
\centering
\includegraphics[scale=0.4]{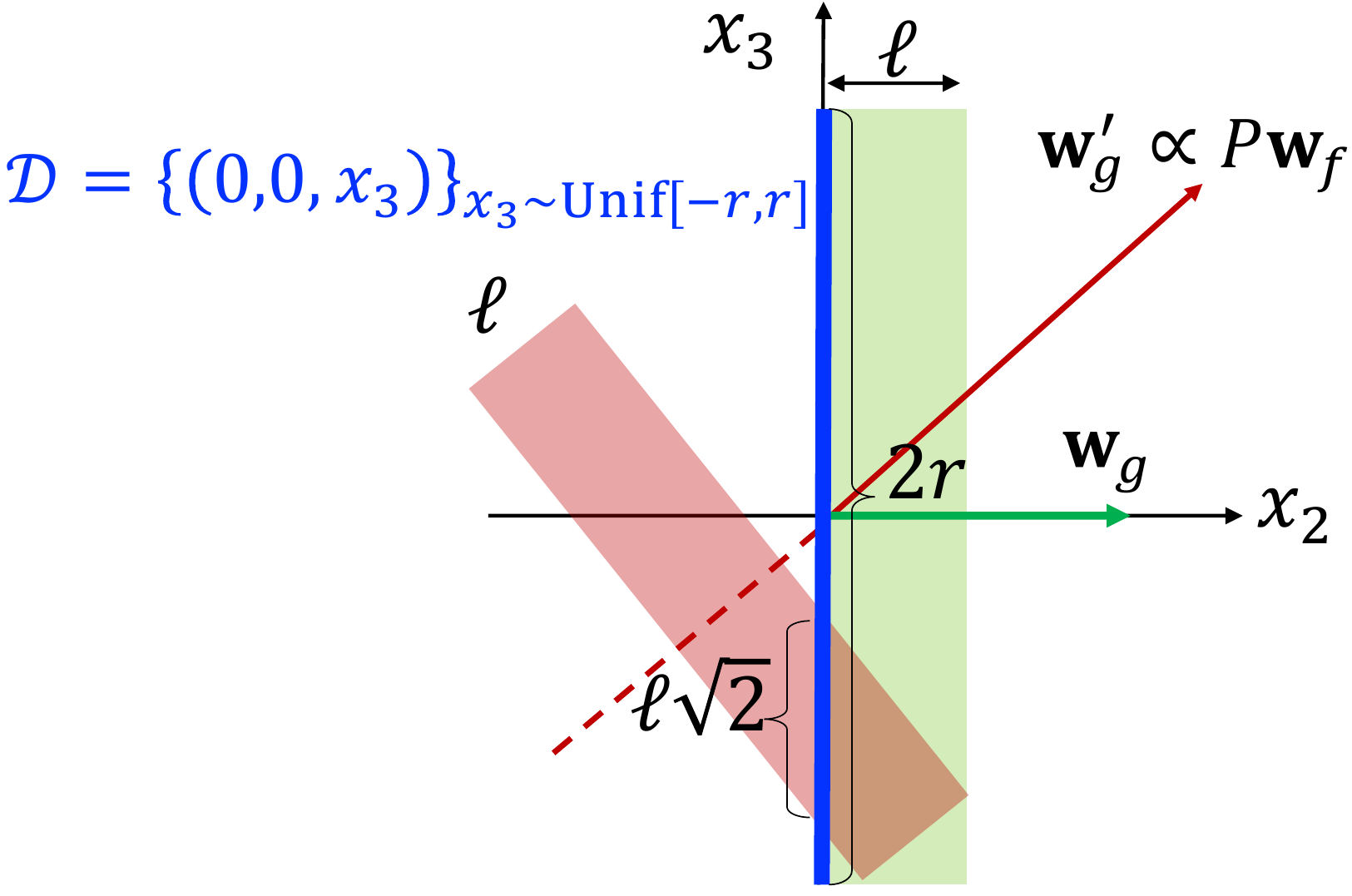}
\caption{{\small Mechanism $g'(\vec x) = \sign( \vec w_g' \cdot \vec x - b_g)$ for any $\vec w'_g \propto P \vec w_f$ is far from optimal.}}
\label{fig:bad-ex}
\end{figure}
Using the characterization of an individual's best-response to $g$ from Lemma~\ref{lem:def_delta},
for any true quality function $f(\vec x) = \vec w_f\cdot \vec x - b_f$ and $g(\vec x) =\sign(\vec w_g \cdot \vec x - b_g)$, we have 
\begin{align}
\gain(g) 
&= (\vec w_g \cdot \vec w_f) \cdot 
\sden^{1/c}(\vec w_g, b_g).
\label{eq:formulation}
\end{align}

While mechanisms with $\vec w_g \propto P\vec w_f$ can be far from the optimal, one can still achieve a $1/(4rc)$-approximation to the optimal mechanism by using such mechanisms and optimizing over the bias term $b_g$.
\begin{theorem}[$1/(4rc)$-approximation]
\label{thm:threshold:large}
Consider the polynomial time algorithm that returns the best $g$ from
$\G = \{ \sign(\frac{P\vec w_f}{\| P \vec w_f\|} \cdot \vec x - b_g) | b_g  = i /2c, \forall i\in \left[ \lceil 2rc \rceil +1 \right]  \}$. Then, $\gain(g) \geq \frac{1}{4rc} \gain (g_\opt)$.
\end{theorem}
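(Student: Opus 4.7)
The plan is to upper bound $\gain(g_\opt)$ and lower bound the algorithm's gain separately, both as scalar multiples of $\|P\vec w_f\|$, and then divide.

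For the upper bound, I apply (\ref{eq:formulation}) to $g_\opt(\vec x) = \sign(\vec w_{opt} \cdot \vec x - b_{opt})$. Since $\vec w_{opt} \in \mathrm{Img}(P)$ is a unit vector and $P$ is a symmetric projection, $\vec w_{opt}\cdot \vec w_f = \vec w_{opt}\cdot P\vec w_f \leq \|P\vec w_f\|$ by Cauchy--Schwarz. Also $\sden^{1/c}(\vec w_{opt}, b_{opt}) \leq 1/c$ because the integrand $(b_{opt} - \vec w_{opt}\cdot \vec x)$ is bounded by $1/c$ on the support of the indicator. Hence $\gain(g_\opt) \leq \|P\vec w_f\|/c$.

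For the lower bound, the key observation is that every candidate in $\G$ uses the same direction $\vec u := P\vec w_f/\|P\vec w_f\|$, and this direction satisfies $\vec u\cdot \vec w_f = \|P\vec w_f\|^2/\|P\vec w_f\| = \|P\vec w_f\|$ (i.e., it is optimal in the direction coefficient of (\ref{eq:formulation})). Via (\ref{eq:formulation}) it then suffices to exhibit a grid bias $b^*$ with $\sden^{1/c}(\vec u, b^*) \geq 1/(4rc^2)$, since combining this with the upper bound gives $\gain(g)/\gain(g_\opt) \geq 1/(4rc)$ as claimed.

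I would produce such a $b^*$ via an averaging argument. Writing $t := \vec u \cdot \vec x$ and interchanging sum and expectation,
\[ \sum_{b\in \mathrm{grid}} \sden^{1/c}(\vec u, b) \;=\; \E_{\vec x \sim \D}\Big[\!\! \sum_{b\in \mathrm{grid}\cap [t,t+1/c]}\!\!(b-t)\Big]. \]
Since the grid has spacing $1/(2c)$ while the window $[t, t+1/c]$ has length $1/c$, the inner sum contains 2 or 3 grid points, and a short interval computation shows their total weight $\sum(b-t)$ is always at least $1/(2c)$. Hence the full sum is at least $1/(2c)$; averaging over the $\lceil 2rc\rceil+1$ grid biases yields the desired $b^*$ with $\sden^{1/c}(\vec u, b^*) \geq 1/(4rc^2)$ (up to lower-order terms in $rc$).

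The main obstacle is the boundary handling: the interval-computation lower bound $1/(2c)$ only applies cleanly when the window $[t, t+1/c]$ is fully contained in the range of the grid, so one needs to verify that this happens for all $\vec x \in \mathrm{supp}(\D)$. The theorem's choice $\{i/(2c): i \in [\lceil 2rc\rceil + 1]\}$ is evidently tailored to this (it spans an interval of length roughly $r + 1/(2c)$, matching the range of $\vec u \cdot \vec x$ after a translation absorbing $b_f$); making this coverage rigorous, perhaps via a coordinate shift ensuring $\vec u \cdot \vec x \geq 0$ on the support or a negligible-loss argument for boundary mass, is the sole non-routine technicality. Everything else is bookkeeping.
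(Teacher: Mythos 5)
The paper's source does not actually contain a proof of Theorem~\ref{thm:threshold:large} (the appendix jumps from Lemma~\ref{lem:def_delta} to Lemma~\ref{lem:threshold:discretize}), so there is no canonical argument to compare against. Judged on its own terms, your decomposition is the natural one and most of it checks out: the upper bound $\gain(g_\opt) = (\vec w_{\opt}\cdot\vec w_f)\,\sden^{1/c}(\vec w_{\opt}, b_{\opt}) \leq \|P\vec w_f\| \cdot \tfrac1c$ is correct (using Cauchy--Schwarz inside $\mathrm{Img}(P)$ for the first factor and the pointwise bound $b-\vec w\cdot\vec x \leq 1/c$ on the indicator's support for the second), the choice $\vec u = P\vec w_f/\|P\vec w_f\|$ does maximize the direction coefficient with $\vec u\cdot\vec w_f = \|P\vec w_f\|$, and the per-point interval computation $\sum_{b \in \mathrm{grid}\cap[t,t+1/c]}(b-t) \geq 1/(2c)$ for a grid of spacing $1/(2c)$ is right.

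The coverage issue you park at the end, however, is not a routine technicality to be absorbed by bookkeeping; it is the one place where the argument actually breaks, and as far as I can tell the theorem as literally written contains a typo that makes it false. The soft-margin density $\sden^{1/c}(\vec u, b)$ is nonzero only when $b \in [\vec u\cdot\vec x,\ \vec u\cdot\vec x + 1/c]$ for some $\vec x$ in the support, and since $\vec u\cdot\vec x$ ranges over $[-r, r]$, the relevant biases live in $[-r,\ r+1/c]$. The stated grid $\{i/2c : i \in [\lceil 2rc\rceil + 1]\}$ consists only of strictly positive values spanning roughly $[0, r+1/(2c)]$, so it misses the entire negative half of this range; Equation~\eqref{eq:formulation} has no $b_f$-dependence, so there is no shift available to rescue this. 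Concretely, a distribution concentrated at a point with $\vec u\cdot\vec x_0 = -r/2$ and $rc > 2$ gives every grid bias zero soft-margin density while $g_\opt$ (direction $\vec u$, bias $-r/2 + 1/(2c)$) has gain $\|P\vec w_f\|/(2c) > 0$, directly violating the claimed inequality. The fix is to take a two-sided grid of roughly $4rc$ biases with spacing $1/(2c)$ covering $[-r,\ r+1/c]$; your averaging argument then goes through verbatim, but the resulting ratio is $\approx 1/(8rc)$ rather than $1/(4rc)$, a benign constant-factor gap. So your plan is sound, but the "boundary handling" you defer is not something one verifies for the grid as given --- it is where the theorem statement itself needs to be amended before any proof can succeed.
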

This is a good approximation
when the  cost unit $c$ is not too large compared to  the radius of the domain $r$.
However, in most cases the cost to change ones' feature is much larger than a constant fraction of the radius.  In this case to find $g_\opt\in \Gth$, we need to simultaneously optimize the total density of instances that fall within the margin of the classifier, their average distance to the decision boundary, and the correlation between $\vec w_g$ and $\vec w_f$.

One of the challenges involved here is finding a halfspace whose margin captures a large fraction of instances, i.e., has large margin density.
Many variants of this problem have been studied in the past and are known to be hard. For example, the densest subspace, the densest halfspace, densest cube, and the densest ball are all known to be hard to approximate~\citep{ben2002computational,hardt2013algorithms,Johnson78}.
Yet, finding dense regions is a routine unsupervised learning task for which  existing optimization tools  are known to perform well in practice.
Therefore, we assume that we have access to such an optimization tool, which we call \emph{a density optimization oracle.} 
\begin{definition}[Density Optimization Oracle]
Oracle $\mathcal{O}$ takes any distribution $\D$, margin $\ell$, a set $\K \subseteq \R^n$, takes $O(1)$ time and returns
\begin{equation}
 \mathcal{O}(\D, \ell, \K) \in \argmax_{\substack{\vec w\in \K, \|\vec w\| = 1\\b\in \R}} \mden^\ell_\D(\vec w, b).
\end{equation}
\end{definition}

Another challenge  is that $\gain(\cdot)$ is a non-smooth function. As a result, there are distributions for which small changes to $(\vec w, b)$, e.g., to improve $\vec w\cdot \vec w_f$, could result in a completely different gain.
However, one of the properties of real-life distributions is that there is some amount of noise in the data, e.g., because measurements are not perfect. This is modeled by \emph{smoothed distributions}, as described in Section~\ref{sec:prelim}.
Smooth distributions provide an implicit regularization that smooths the expected loss function over the space of all solutions.

Using these two assumptions, i.e.,  access to a density optimization oracle and smoothness of the distribution, we show that there is a polynomial time algorithm that achieves a $1/4$ approximation to the gain of $g_\opt$.
At a high level,  smoothness of $\D$ allows us to limit our search to those $\vec w_g \in \R^n$ for which $\vec w_f \cdot \vec w_g = v$ for a small set of discrete values $v$.
For each $v$, we use the density oracle to search over all $\vec w_g$ such that $\vec  w_f \cdot \vec w_g = v$ and return a candidate with high margin density.
We show how a mechanism with high margin density will lead us to (a potentially different) mechanism with  high soft-margin density and as a result a near optimal gain.
This approach is illustrated in more detail in Algorithm~\ref{alg:opt-w-oracle}.

\newcommand{\C}{\mathcal{C}}
\begin{algorithm}[t]
  \begin{algorithmic}
  \STATE Input:
                $\sigma$-smoothed distribution $\D$ with radius $r$ before perturbation, 
                Vector $\vec w_f$,
                Projection matrix $P$,
                Desired accuracy $\epsilon$.
   \STATE Output: a linear threshold mechanism $g$           
    \STATE{Let $\epsilon' = \min\{\epsilon^4, \epsilon^2 \sigma^4/r^4 \}$ and $\C = \emptyset$.}
    \FOR{$\eta=0,\epsilon' \| P \vec w_f\|, 2\epsilon' \| P \vec w_f\|,\ldots, \|P \vec w_f\|$}
    \STATE{Let $\K_\eta = \mathrm{Img}(P) \cap \{\vec w\mid \vec w\cdot \vec w_f = \eta\}$.}	
    \STATE{Use the density optimization oracle to compute
    \[(\vec w^\eta, b^\eta) \gets \mathcal{O}(\D, \frac{1}{c} ,\K_\eta)
    \]
     }
    \STATE Let $\C \gets \C\cup \{\left(\vec w^\eta, b^\eta), (\vec w^\eta, b^\eta + \frac{1}{2c}\right) \}$
    \ENDFOR
  \RETURN{$g(\vec x) = \sign\left(\vec w\cdot \vec x - b \right)$, where 
    \[ (\vec w, b) \gets \argmax_{(\vec w, b)\in \C} (\vec w\cdot \vec w_f) \sden^{1/c}_\D(\vec w, b).
	\]  
	}
\end{algorithmic}
\caption{$(1/4 - \epsilon)$ Approximation for $\Gth$}
  \label{alg:opt-w-oracle}
\end{algorithm}

\begin{theorem}[\textbf{(Main)} $1/4$-approximation] \label{thm:main-g-threshold}
Consider a distribution over $\X$ and the corresponding $\sigma$-smoothed distribution $\D$ for some $\sigma \in O(r)$.   
Then, for any small enough $\epsilon$, Algorithm~\ref{alg:opt-w-oracle} runs in time 
$\poly(d,  1/\epsilon)$, makes $O(\frac{1}{\epsilon^4}+ \frac{r^2}{\epsilon^2 \sigma^2})$ oracle calls and returns $g \in \Gth$,
such that
\[  \gain(g) \geq \frac 14 \gain(g_\opt) - \epsilon.
\] 
\end{theorem}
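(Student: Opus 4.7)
The plan is to show that one candidate in $\C$ attains gain at least $\tfrac{1}{4}\gain(g_\opt) - \epsilon$. The argument has three ingredients: a shift lemma converting margin density into soft-margin density (which gives the $\tfrac14$ factor and justifies the bias-shift in the algorithm), a discretization step that constructs a direction in some $\K_\eta$ close to $w_{g_\opt}$, and a smoothness-based Lipschitz estimate that controls the loss from this substitution.

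First I would prove the shift lemma: for any unit $w$, any $b$, and $\ell := 1/c$,
\[
 \max\!\bigl\{\sden^\ell(w,b),\ \sden^\ell(w, b+\ell/2)\bigr\} \ge \tfrac{\ell}{4}\, \mden^\ell(w,b).
\]
Split the slab $\{x:\ w\!\cdot\! x - b\in[-\ell,0]\}$ into a far half $[-\ell,-\ell/2]$ of mass $p_A$ and a near half $[-\ell/2,0]$ of mass $p_B$. Points in the far half each contribute at least $\ell/2$ to $\sden^\ell(w,b)$, so $\sden^\ell(w,b)\ge (\ell/2)p_A$. For $b' := b+\ell/2$, the near points become the far half of the shifted slab, giving $\sden^\ell(w,b') \ge (\ell/2)p_B$. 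Taking the max and applying $\max(a,b)\ge (a+b)/2$ yields the lemma. This is precisely why $\C$ stores both $(w^\eta,b^\eta)$ and $(w^\eta,b^\eta + 1/(2c))$.

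Next I would discretize over $\eta := w_g\cdot w_f$. Let $w^* := w_{g_\opt}$, $b^* := b_{g_\opt}$, $s^* := \sden^\ell(w^*,b^*)$, and $\eta^* := w^*\cdot w_f$; we may assume $\eta^* \ge 0$ since otherwise $\gain(g_\opt)\le 0$. Pick the grid point $\eta$ in $[\eta^* - \epsilon'\|Pw_f\|,\eta^*]$. Writing $p := Pw_f/\|Pw_f\|$, decompose $w^* = \alpha^* p + \sqrt{1-(\alpha^*)^2}\,u$ with $\alpha^* = \eta^*/\|Pw_f\|$ and $u\in \mathrm{Img}(P)$ a unit vector orthogonal to $p$. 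The surrogate $w := (\eta/\|Pw_f\|)\,p + \sqrt{1-(\eta/\|Pw_f\|)^2}\,u$ lies in $\K_\eta$, and the elementary inequality $|\sqrt{a}-\sqrt{b}|\le\sqrt{|a-b|}$ gives $\|w - w^*\| \le \sqrt{3\epsilon'}$. I would then invoke smoothness: because $\D$ is the convolution of $\P$ with $N(0,\sigma^2 I)$,
\[
 \mden^\ell(w,b) = \E_{y\sim\P}\!\Bigl[\Phi\!\bigl(\tfrac{b-w\cdot y}{\sigma}\bigr) - \Phi\!\bigl(\tfrac{b-\ell-w\cdot y}{\sigma}\bigr)\Bigr],
\]
whose gradient in $w$ has norm $O(r/\sigma)$, so $\mden^\ell(w,b^*) \ge \mden^\ell(w^*,b^*) - O(\sqrt{\epsilon'}\,r/\sigma)$. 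The oracle guarantees $\mden^\ell(w^\eta,b^\eta) \ge \mden^\ell(w,b^*)$, and the trivial bound $\ell\,\mden^\ell(w^*,b^*) \ge s^*$ closes the chain. Applying the shift lemma to $(w^\eta,b^\eta)$ and using $\eta \ge \eta^* - \epsilon'\|Pw_f\|$, the chosen $(w^\eta,\hat b)\in\C$ attains
\[
 \gain(\hat g) \ge \tfrac{1}{4}\eta^* s^* - O\!\Bigl(\tfrac{\epsilon'}{c} + \tfrac{\sqrt{\epsilon'}\,r}{c\,\sigma}\Bigr),
\]
and the stated choice $\epsilon' = \min\{\epsilon^4,\epsilon^2 \sigma^4/r^4\}$ (together with $\sigma = O(r)$) makes both errors $O(\epsilon)$; the grid has $1/\epsilon'$ points, each incurring one oracle call.

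The hard part will be the interaction between the $\eta$-discretization and the smoothness bound. A direct bound on $\|w - w^*\|$ in terms of the angle between $w$ and $w^*$ blows up when $w^*$ is almost parallel to $Pw_f$; the $|\sqrt{a} - \sqrt{b}| \le \sqrt{|a-b|}$ trick is what keeps the perturbation uniform over $\alpha^* \in [0,1]$ and dictates the $\sigma^4/r^4$ scaling in $\epsilon'$.
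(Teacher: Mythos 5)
Your proof is correct and reaches the same conclusion, but it takes a genuinely different route through the perturbation step, and the difference is worth noting. The paper's proof goes: (i) Lemma~\ref{lem:threshold:discretize} (an explicit unit-vector construction) to round $\vec w^*\cdot\vec w_f$ \emph{up} to the grid while keeping $\hat{\vec w}\cdot\vec w^*\geq 1-2\epsilon'\|P\vec w_f\|$; (ii) Lemma~\ref{lem:threshold:soft-margin-lipschitz}, a Lipschitz estimate for the \emph{soft}-margin density $\sden$ proved via a rotation argument together with the density-ratio and tail bounds of Appendix~\ref{sec:app:smooth}; (iii) Lemma~\ref{lem:threshold:1/4-approx} (the bias-shift argument) to trade margin density for soft-margin density and pick up the factor $\tfrac14$. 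You instead: round $\eta$ \emph{down}, bound $\|\vec w-\vec w^*\|$ directly by decomposing in the $\{P\vec w_f, u\}$ basis and using $|\sqrt a-\sqrt b|\le\sqrt{|a-b|}$, and then prove Lipschitzness of the plain margin density $\mden$ via the clean identity $\mden^\ell(\vec w,b)=\E_{y\sim\P}[\Phi((b-\vec w\cdot y)/\sigma)-\Phi((b-\ell-\vec w\cdot y)/\sigma)]$, whose $\vec w$-gradient is bounded by $O(r/\sigma)$; the conversion $\sden^\ell\le\ell\,\mden^\ell$ plus the shift lemma then finishes. Your route avoids Lemma~\ref{lem:threshold:soft-margin-lipschitz} entirely, which is the most technical lemma in the paper (it needs the rotation matrix $M$, Lemma~\ref{lem:smooth-den/den}, and the far-tail bound); the one-line gradient bound on the Gaussian CDF is substantially simpler and gives the same $O(\sqrt{\epsilon'}\,r/\sigma)$ perturbation. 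Your shift lemma is identical in substance to Lemma~\ref{lem:threshold:1/4-approx}, and your final error bookkeeping $O(\epsilon'/c + \sqrt{\epsilon'}\,r/(c\sigma))$ matches the role that $\epsilon'=\min\{\epsilon^4,\epsilon^2\sigma^4/r^4\}$ plays in the paper. The only cosmetic gap is that you should note the oracle optimizes over $b$ as well, so $\mden^\ell(\vec w^\eta,b^\eta)\ge\mden^\ell(\vec w,b^*)$ holds without needing $b^*$ to be on any grid; you implicitly use this but do not say it.
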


\begin{proof}[Proof Sketch of Theorem~\ref{thm:main-g-threshold}]
Recall from Equation~\ref{eq:formulation}, $\gain (g_\opt)$ is the product of two values
$(\vec w_{g_\opt} \cdot \vec w_f)$ and $\sden(\vec w_{g_\opt}, b_{g_\opt})$.
The first lemma shows that we can do a grid search over the value of 
$(\vec w_{g_\opt} \cdot \vec w_f)$.
In other words, there is a predefined grid on the values of $\vec w\cdot \vec w_f$ for which there is a $\vec w$ for which $\vec w\cdot \vec w_f \approx \vec w_{g_\opt}\cdot \vec w_f$. This is demonstrated in Figure~\ref{fig:discrete}.

\begin{lemma}[Discretization] \label{lem:threshold:discretize}
For any two unit vectors $\vec w_1, \vec w_2 \in \mathrm{Img}(P)$, and any $\epsilon$, such that $\vec w_1 \cdot \vec w_2 \leq 1- 2\epsilon$, there is a unit vector $\vec w \in \mathrm{Img}(P)$, such that $\vec w\cdot \vec w_2 = \vec w \cdot \vec w_1 + \epsilon$ and $\vec w \cdot \vec w_1 \geq 1- \epsilon$. 
\end{lemma}

The second technical lemma shows that approximating $\vec w_g$ by a close vector, $\vec w$, and $b_g$ by a close scalar, $b$, only has a small effect on its soft margin density.
That is, the soft margin density is \emph{Lipschitz}.
For this claim to hold, it is essential for the distribution to be smooth.
The key property of a $\sigma$-smoothed distribution $\D$---corresponding to the original distribution $\P$---is that $\mden_\D^\ell(\vec w_g, b_g)$ includes instances $\vec x\sim \P$ that are not in the margin of $(\vec w_g, b_g)$.
As shown in Figure~\ref{fig:discrete}, these instances also contribute to the soft margin density of any other $\vec w$ and $ b$, as long as the distance of $\vec x$ to halfplanes $\vec w \cdot  \vec x = b$ and  $\vec w_g \cdot \vec x = b_g$ are comparable.
So, it is sufficient to show that the distance of any $\vec x$ to two halfplanes with a small angle is approximately the same. Here, angle between two unit vectors is defined as $\theta(\vec w, \vec w') = \arccos(\vec w \cdot \vec w')$. 
Lemma~\ref{lem:threshold:discretize} leverages this fact to prove that soft margin density is Lipschitz smooth. 

\begin{lemma}[Soft Margin Lipschitzness] \label{lem:threshold:soft-margin-lipschitz}
For any distribution over $\X$ and its corresponding $\sigma$-smooth distribution $\D$, for $\epsilon \leq \frac{1}{3R^2}$,  any $\nu< 1$,  for  $R = 2r + \sigma \sqrt{2\ln(2/\nu)}$, and any $\ell \leq O(R)$, 
if $\theta(\vec w_1, \vec w_2) \leq \epsilon \sigma^2$ and  $|b_1 - b_2| \leq \epsilon \sigma$,  we have  
\[  \left| \sden_{\D}^\ell(\vec w_1, b_1)  - \sden_{\D}^\ell(\vec w_2, b_2)    \right| 
  \leq O(\nu + \epsilon R^2).
  \]
\end{lemma}

\begin{figure}
\centering
\includegraphics[scale=0.8]{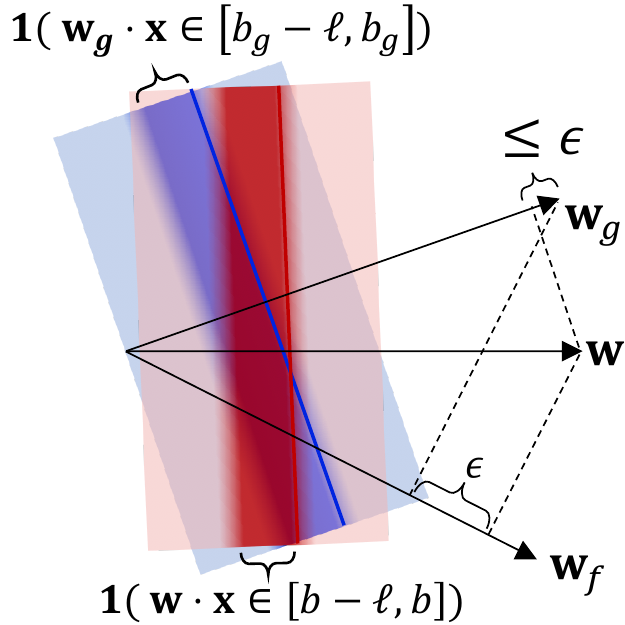}
\caption{{\small The $\vec w_g \cdot \vec w_f$ can be increased by any small amount $\epsilon$ to match a specific value,
when $\vec w_f$ is shifted by a $\leq \epsilon$ angle to vector $\vec w$.
When a distribution is smooth, the margin density of mechanisms $(\vec w, b)$ and $(\vec w_g, b_g)$ are close when
$\theta(\vec w, \vec w_g)$ and $|b- b_g|$ are small.
The lightly shaded area shows that instances outside of the margin contribute to (soft-)margin density.
}}
\label{fig:discrete}
\end{figure}

The final step to bring these results together is 
to show that we can find a mechanism (between those on the grid) that has a large soft margin density.
To do this, we show that the mechanism (with potentially small changes to it) that has the highest
margin density has at least $1/4$ of the soft density of the optimal mechanism.
The proof technique of this lemma involves analyzing whether more than half of instances within an $\ell$-margin of a mechanism 
are at distance at most $\ell/2$ of the margin, in which case soft margin density of the mechanism is at least $1/4$ of its density. Otherwise, shifting the bias of the mechanism by $\ell/2$ results in a mechanism whose soft margin is a least $1/4$ of the original margin density.

\begin{lemma}[Margin Density Approximates Soft-Margin Density]
\label{lem:threshold:1/4-approx}
For any distribution $\D$ over $\R^n$, a class of unit vectors $\K\subseteq \R^n$, margin $\ell > 0$, and a set of values $N$,  let $(\vec w_\eta,  b_\eta) = \mathcal{O}(\D, \ell, \K \cap\{ \vec w\cdot \vec w_f = \eta\})$ for $\eta\in N$, and let $(\hat{\vec w}, \hat{b})$ be the mechanism with maximum $\eta \mden^\ell_\D(\vec w_\eta,  b_\eta)$ among these options.
Then, 
\[\!\!\max\!\left\{\!\!\gain(\hat{\vec w}, \hat b),\!\gain(\hat{\vec w}, \hat b +\frac{\ell}{2})\!\! \right\}\!\!
\geq \!\! \frac 14\!\!
\max_{\substack{\vec w\in \K\\ \vec w\cdot \vec w_f \in N \\b\in \R}}\!\!\!\gain(\vec w, b).
\]
\end{lemma}
Putting these lemmas together, one can show that Algorithm~\ref{alg:opt-w-oracle}, finds a $\frac 14$ approximation.
\end{proof}

\section{Learning Optimal Mechanisms From Samples}
\label{sec:learn}
Up to this point, we have assumed that distribution $\D$ is known to the mechanism designer and all computations, such as measuring margin density and soft margin density, can be directly performed on $\D$.
However, in most applications the underlying distribution over feature attributes 
is unknown or the designer only has access to a small set of historic data.
In this section, we show that all computations that are performed in this paper can be done instead on a small set of samples that are drawn i.i.d. from distribution $\D$. Note that (since no mechanism is yet deployed at this point) these samples represent the initial non-strategic feature attributes.

We first note that the characterization of the optimal \emph{linear} mechanism (Theorem~\ref{thm:linear}) is independent of distribution $\D$, that is, even with no samples from $\D$ we can still design the optimal linear mechanism.
On the other hand, the optimal \emph{linear threshold} mechanism (Theorem~\ref{thm:main-g-threshold}) heavily depends on the distribution of instances, e.g., through the margin and soft-margin density of $\D$.
To address sample complexity of learning a linear threshold mechanism, we use the concept of \emph{pseudo-dimension}. This is the analog of VC-dimension for real-valued functions.

\begin{definition}[Pseudo-dimension]
Consider a set of real-valued functions $\F$ on the instance space $\X$.  
A set of instances $x^{(1)}, \dots, x^{(n)}\in \X$ is \emph{shattered} by $\F$,
if there are values $v^{(1)}, \dots, v^{(n)}$ such for any subset $T\subseteq[n]$ there is a function $f_T\in \F$ for which $f(x^{(i)}) \geq v^{(i)}$ if and only if $i\in T$.
The size of the largest set that is shattered by $\F$ is the \emph{pseudo-dimension} of $\F$.
\end{definition}

It is well-known that the pseudo-dimension of a function class closely (via upper and lower bounds) controls the number of samples that are needed for learning a good function. This is characterized by the uniform-convergence property as shown below.

\begin{theorem}[Uniform Convergence~\cite{Pol84}]
\label{thm:pollard}
Consider the class of functions $\F$ such that $f: \X \rightarrow [0, H]$ with pseudo-dimension $d$. For any distribution $\D$ and a set $S$ of $m = \epsilon^{-2} H^2(d + \ln(1/\epsilon))$ i.i.d. randomly drawn samples from $\D$, with probability $1-\delta$, for all $f\in \F$
\[  \left| \frac{1}{m} \sum_{x\in S} f(x) - \E_{x\sim \D}[f(x)] \right|  \leq \epsilon.
\]
\end{theorem}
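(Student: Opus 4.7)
The plan is to follow the classical symmetrization-plus-covering argument due to Pollard, adapted to real-valued functions via the pseudo-dimension. I would proceed in four steps.

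First, I would carry out \emph{symmetrization} by introducing an independent ghost sample $S' = \{y^{(1)}, \ldots, y^{(m)}\}$ drawn i.i.d.\ from $\D$. A standard argument (using that for $m$ larger than a constant multiple of $H^2/\epsilon^2$ the empirical mean on $S'$ concentrates around $\E_{\D}[f]$ with constant probability for any fixed $f$) shows that
\[
\Pr\!\left[\sup_{f\in \F} \left|\tfrac{1}{m}\sum_{i=1}^m f(x^{(i)}) - \E_{\D}[f]\right| > \epsilon \right] \leq 2\Pr\!\left[\sup_{f\in \F} \left|\tfrac{1}{m}\sum_{i=1}^m (f(x^{(i)}) - f(y^{(i)}))\right| > \tfrac{\epsilon}{2}\right].
\]
Since the pairs $(x^{(i)}, y^{(i)})$ are exchangeable, inserting independent Rademacher signs $\sigma_i \in \{\pm 1\}$ into the summands does not change the law of the right-hand side, so it suffices to control a conditional Rademacher process on the joint sample.

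Second, I would condition on the joint sample of $2m$ points and control the supremum by an $\ell_\infty$ \emph{covering} argument driven by the pseudo-dimension. Discretizing the codomain $[0,H]$ into $\lceil 2H/\epsilon\rceil$ uniform levels, each $f \in \F$ induces, at the $2m$ sample points, a vector of discretized values; the classical Sauer-Shelah-style bound for pseudo-dimension (apply the VC bound to the binary thresholding of $\F$ at each level, then multiply across levels) gives at most $(2meH/(\epsilon d))^d$ distinct such vectors. On each equivalence class the Rademacher sum differs from a fixed representative by at most $\epsilon/4$, so it suffices to bound the process over this finite cover.

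Third, for a fixed representative $f$, Hoeffding's inequality applied to the bounded zero-mean summands $\sigma_i(f(x^{(i)}) - f(y^{(i)})) \in [-H, H]$ yields
\[
\Pr_\sigma\!\left[\left|\tfrac{1}{m}\sum_{i=1}^m \sigma_i(f(x^{(i)}) - f(y^{(i)}))\right| > \tfrac{\epsilon}{4}\right] \leq 2\exp\!\left(-\tfrac{m\epsilon^2}{32 H^2}\right).
\]
A union bound over the cover, followed by integrating out the conditioning on the joint sample, gives an overall failure probability of order $(mH/\epsilon)^d \exp(-m\epsilon^2/(32H^2))$. Solving for $m$ so that this is at most $\delta$ yields the stated sample complexity of order $\epsilon^{-2} H^2 (d + \ln(1/\epsilon))$ (with logarithmic dependence on $1/\delta$ absorbed into the constant).

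The main obstacle is the covering step: the pseudo-dimension only directly controls \emph{binary} shatterings of $\F$ at an arbitrary threshold, so extracting an $\ell_\infty$ covering-number bound on the real-valued family requires the standard but delicate trick of simultaneously discretizing the range and invoking Sauer-Shelah level-by-level, then gluing the binary growth bounds into a single bound on the number of distinct discretized value patterns at $2m$ points. Once that covering bound is in hand, the remaining pieces---symmetrization, Rademacher randomization, Hoeffding, and a union bound---are routine.
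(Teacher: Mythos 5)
This statement is not proved in the paper: it is quoted from Pollard's monograph \citep{Pol84} as a known uniform-convergence result and used as a black box. So there is no ``paper's proof'' to compare against; what I can do is assess whether your reconstruction is a correct account of the classical argument.

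Your four-step outline (symmetrization with a ghost sample, a covering bound driven by pseudo-dimension, Hoeffding on a fixed representative, union bound) is exactly the structure of Pollard's proof, and your identification of the covering step as the crux is right. One step is described imprecisely, though: you propose to ``apply the VC bound to the binary thresholding of $\F$ at each level, then multiply across levels.'' If you literally multiply the Sauer--Shelah growth bound over $K \approx 2H/\epsilon$ levels you get $\big((2me/d)^{d}\big)^{K}$, which blows up in $K$ and does not recover the bound $(2meH/(\epsilon d))^{d}$ that you then write down. The correct derivation does not iterate over levels and multiply. Instead one passes to the subgraph class $\{(x,t)\mapsto \ind[f(x)\geq t]\}_{f\in\F}$, which by the definition of pseudo-dimension has VC dimension $d$, and applies Sauer--Shelah \emph{once} to the $2m \cdot K$ points of the product set (sample points $\times$ threshold levels). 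That yields $\left(\frac{e\cdot 2m K}{d}\right)^{d} = \left(\frac{4emH}{\epsilon d}\right)^{d}$ distinct discretized patterns, which is the bound you wanted. Finally, a small accounting point: after the union bound your failure probability is of order $(mH/\epsilon)^{d}\exp(-m\epsilon^{2}/(32H^{2}))$, and solving gives $m = \Theta\!\left(\epsilon^{-2}H^{2}\big(d\ln(1/\epsilon) + \ln(1/\delta)\big)\right)$. The $\ln(1/\delta)$ term cannot be absorbed into a constant as $\delta\to 0$; its absence from the paper's stated sample size (which also has $d+\ln(1/\epsilon)$ rather than $d\ln(1/\epsilon)$) is an imprecision in the paper's citation of Pollard, not something your proof should reproduce.

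Apart from that one mis-described step and the constant-factor/log-factor bookkeeping, your reconstruction is sound and matches the standard argument.
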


As is commonly used in machine learning, uniform convergence implies that choosing any mechanism based on its performance on the sample set leads to a mechanism whose expected performance is within $2\epsilon$ of the optimal mechanism for the underlying distribution.

\begin{lemma}[Pseudo-dimension]
\label{lem:pseudo-mech}
For each $g= sign(\vec w_g\cdot \vec x - b_g)$ and $\vec x$ let $\gain_{\vec x}(g)$ 
be the gain of $g$ just on instance $\vec x$ and let $\mden_{\vec x}^\ell(g) = \ind(\vec w_g\cdot \vec x- b_g\in [-\ell, 0])$. 
The class of real-valued functions $\{\gain \circ g\}_{g\in \Gth}$ has a pseudo-dimension that is at most $O(\mathrm{rank}(P))$
 Moreover, the class of functions $\{ \vec x \rightarrow \mden_{\vec x}^\ell(g) \}_{g\in \Gth}$ has a pseudo-dimension (equivalently VC dimension)  $O(\mathrm{rank}(P))$.
\end{lemma}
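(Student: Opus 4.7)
The plan is to handle the two classes separately, with the main reduction being that both classes only see $\vec x$ through $\vec w_g \cdot \vec x$ with $\vec w_g \in \mathrm{Img}(P)$, so effectively we are working in a $k$-dimensional subspace where $k := \mathrm{rank}(P)$. Concretely, since $\vec w_g \in \mathrm{Img}(P)$ we have $\vec w_g \cdot \vec x = \vec w_g \cdot P\vec x$, so every function in either class is invariant under replacing $\vec x$ by $P\vec x$. Each mechanism $g \in \Gth$ is therefore parametrized by $(\vec w_g, b_g) \in \mathrm{Img}(P) \times \R$, which is a $(k+1)$-parameter family.

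First I would handle the margin-density class $\{\vec x \mapsto \mden_{\vec x}^\ell(g)\}_{g \in \Gth}$, which is a class of $\{0,1\}$-valued functions, so pseudo-dimension equals VC dimension. Each function is the indicator of the slab $\{\vec x : \vec w_g \cdot \vec x - b_g \in [-\ell, 0]\}$, i.e., the intersection of the two halfspaces $\{\vec w_g \cdot \vec x \leq b_g\}$ and $\{\vec w_g \cdot \vec x \geq b_g - \ell\}$. After projecting to $\mathrm{Img}(P) \cong \R^k$, this is a subclass of intersections of two halfspaces in $\R^k$. The VC dimension of halfspaces in $\R^k$ is $k+1$, and the VC dimension of intersections of two halfspaces is $O(k)$ by standard Sauer-Shelah / Dudley arguments (or by viewing each slab as being definable by $2$ linear inequalities in $k+1$ parameters).

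Next I would turn to the gain class $\{\gain \circ g\}_{g \in \Gth}$. Using Lemma~\ref{lem:def_delta} and linearity of $f$, the per-instance gain takes the explicit piecewise form
\[
\gain_{\vec x}(g) =
\begin{cases}
-(\vec w_g \cdot \vec x - b_g)(\vec w_g \cdot \vec w_f), & \vec w_g \cdot \vec x - b_g \in [-1/c, 0], \\
0, & \text{otherwise.}
\end{cases}
\]
To bound the pseudo-dimension, I would apply the standard subgraph characterization: pseudo-dim equals the VC dimension of the family of upper-subgraphs $\{(\vec x, t) : \gain_{\vec x}(g) \geq t\}$ in $\X \times \R$. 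For any sample $(\vec x^{(i)}, t^{(i)})_{i \leq m}$, the event $\gain_{\vec x^{(i)}}(g) \geq t^{(i)}$ is a fixed Boolean combination of a constant number of polynomial inequalities in the $k+1$ parameters $(\vec w_g, b_g)$ of degree at most $2$ (linear for the band-membership checks, quadratic for the product $(\vec w_g \cdot \vec x - b_g)(\vec w_g \cdot \vec w_f) \geq -t$). Invoking Warren's theorem (or the Goldberg-Jerrum bound for semi-algebraic function classes), the number of realizable sign patterns over $m$ samples is at most $(O(m))^{O(k)}$, and pseudo-shattering requires $2^m$ patterns; solving gives $m = O(k)$ (up to log factors that are standardly absorbed in the $O(\mathrm{rank}(P))$ notation).

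The main obstacle is the gain-class bound: unlike the margin-density class, $\gain_{\vec x}(g)$ is piecewise and nonlinear in the parameters because of the $(\vec w_g \cdot \vec w_f)(\vec w_g \cdot \vec x - b_g)$ product and the indicator for band-membership. The cleanest route is the semi-algebraic/polynomial-threshold framework above, where one only needs to verify that each upper-level set is defined by $O(1)$ polynomial inequalities of bounded degree in $O(k)$ parameters; once that is established, the pseudo-dimension bound follows from a black-box application of existing tools, matching the stated $O(\mathrm{rank}(P))$ bound.
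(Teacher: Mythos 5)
Your argument for the margin-density class essentially matches the paper's: you both observe that each function is the indicator of a slab $\{\vec x : \vec w_g \cdot \vec x - b_g \in [-\ell, 0]\}$, a Boolean combination of two linear thresholds on the $k$-dimensional visible subspace, giving VC dimension $O(k)$. For the gain class, however, you take a genuinely different route. The paper writes $\gain_{\vec x}(g) = \min\{h^1_g(\vec x), h^2_g(\vec x)\}$ with $h^1_g(\vec x) = \ind(\vec w_g \cdot \vec x - b_g \leq 0)(b_g - \vec w_g \cdot \vec x)(\vec w_g \cdot \vec w_f)$ and $h^2_g(\vec x) = \ind(\vec w_g \cdot \vec x - b_g \geq -\ell)(\vec w_g \cdot \vec w_f)$, observes that each $h^i_g$ is a monotone function of the single scalar $\vec w_g \cdot \vec x - b_g$, and then stacks three facts from Pollard: linear functionals on a rank-$k$ subspace have pseudo-dimension $k$, monotone composition preserves this up to constants, and pointwise minima of two classes of bounded pseudo-dimension remain bounded. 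You instead go through the semi-algebraic framework: you show each upper-level set $\{(\vec w_g, b_g) : \gain_{\vec x}(g) \geq t\}$ is a fixed Boolean combination of $O(1)$ polynomial inequalities of degree at most $2$ in the $k+1$ parameters, then invoke Warren/Goldberg--Jerrum sign-pattern counting. Both are standard and both establish the lemma. The paper's decomposition is more bespoke and yields the stated $O(k)$ directly from one source; your argument is a black-box technique that would apply to any piecewise-polynomial per-instance loss without needing to discover the particular $\min$-of-monotone decomposition, at the cost of the extra $\log$ factor in $m = O(k \log k)$, which you correctly note is harmless (it only changes the sample complexity in Theorem~\ref{thm:sample} by a logarithmic factor). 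One small thing worth making explicit in your write-up is that, exactly as in your margin-density argument, the parameter $\vec w_g$ should first be restricted to a $k$-dimensional coordinate system for $\mathrm{Img}(P)$ before the Warren bound is applied, so that the parameter count is $k+1$ and not $n+1$.
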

\begin{figure}
\centering
    \includegraphics[scale = 0.8]{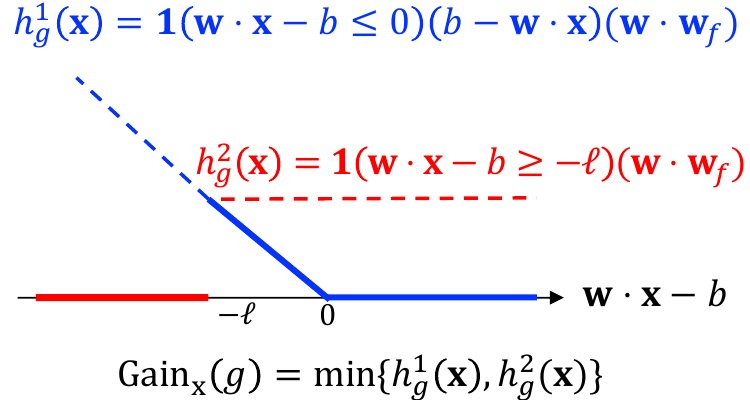}
  \caption{\small $\gain_{\vec x}(g)$ is a minimum of $h^1_g(\vec x)$ and $h^2_g(\vec x)$.}
    \label{fig:pseudo}
\end{figure}

\begin{proof}
For $g\in \Gth$, note that 
\[
\!\gain_{\vec x}(g) \!=\! \ind\!\!\left(\!\vec w_g\! \cdot\! \vec x - b_g \!\in \!\left[-\frac 1c, 0\right] \right)\!(b_g- \vec w_g \cdot \vec x)(\vec w_f \cdot \vec w_g).
\]
Note that we can write $\gain_{\vec x}(g) = \min\{ h_1(\vec x), h_2(\vec x)\}$ for 
$h^1_g(\vec x) := \ind(\vec w \cdot \vec x - b \leq 0)(b - \vec w \cdot \vec x)  (\vec w\cdot \vec w_f)$ and $h^2_g (\vec x) := \ind( \vec w\cdot \vec x - b \geq - \ell) (\vec w\cdot \vec w_f)$.
As show in Figure~\ref{fig:pseudo}, $h^1_g$ and $h^2_g$ are both monotone functions of $\vec w\cdot \vec x - b$.

Pollard~\citep{Pol84} shows that the set of all linear functions in a rank $k$ subspace has pseudo-dimension $k$. Pollard~\citep{Pol84}
also shows that the set of monotone transformations of these  functions  has pseudo-dimension $O(k)$. Therefore, the set of functions $\{h^1_g(\vec x)\}_{g\in \Gth} $ and  $\{h^2_g(\vec x)\}_{g\in \Gth} $, each have pseudo-dimension $\mathrm{Rank}(P)$. 
It is well-known that the set of all minimums of two functions from two classes each with pseudo-dimension $d$ has a pseudo-dimension of $O(d)$. Therefore, $\{ \vec x \rightarrow \gain_{\vec x}(g)\}_{g\in \Gth}$ has pseudo-dimension of at  most $O(\mathrm{rank}(P))$.
A similar construction shows that $\{ \vec x \rightarrow \mden_{\vec x}^\ell(g) \}_{g\in \Gth}$ has VC dimension of at  most $O(\mathrm{rank}(P))$. 
\end{proof}
We give a bound on the number of samples sufficient for finding a near optimal mechanism in $\Gth$.
\begin{theorem}[Sample Complexity]
\label{thm:sample}
For any small enough $\epsilon$ and $\delta$, 
there is 
\[
m \in O\left(\frac{1}{c^2 \epsilon^2}\left(\mathrm{rank}(P) + \ln(1/\delta) \right)\right)
\] 
such that for $S\sim \D^m$ i.i.d. samples with probability $1-\delta$, 
$\hat g\in \Gth$ that optimizes the empirical gain on $S$ has $\gain (\hat g) \geq \gain(g_\opt) - \epsilon$.
Furthermore, with probability $1-\delta$ when  Algorithm~\ref{alg:opt-w-oracle} is run on $S$, the outcome $\hat g$ has $\gain (\hat g) \geq \frac 14 \gain(g_\opt) - \epsilon$.
\end{theorem}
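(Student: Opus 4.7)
The plan is to combine the pseudo-dimension bounds in Lemma~\ref{lem:pseudo-mech} with Pollard's uniform convergence theorem (Theorem~\ref{thm:pollard}). First, observe from Lemma~\ref{lem:def_delta} that for every $g\in \Gth$ the per-instance gain $\gain_{\vec x}(g)$ lies in $[0,1/c]$, since the indicator restricts attention to $\vec w_g\cdot \vec x - b_g\in [-1/c,0]$ and $|\vec w_f\cdot \vec w_g|\le 1$. Setting $H=1/c$ and $d=O(\mathrm{rank}(P))$ in Theorem~\ref{thm:pollard} with target accuracy $\epsilon/2$ yields that $m\in O\!\big(\tfrac{1}{c^2\epsilon^2}(\mathrm{rank}(P)+\ln(1/\delta))\big)$ i.i.d.\ samples suffice to guarantee, with probability at least $1-\delta$, that the empirical mean $\hat{\gain}(g)$ satisfies $|\hat{\gain}(g)-\gain(g)|\le \epsilon/2$ uniformly over $g\in \Gth$.

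The first (ERM) claim is then immediate from the standard two-sided argument: since $\hat g$ maximizes $\hat{\gain}$,
\[ \gain(\hat g)\ge \hat{\gain}(\hat g)-\tfrac{\epsilon}{2}\ge \hat{\gain}(g_\opt)-\tfrac{\epsilon}{2}\ge \gain(g_\opt)-\epsilon.\]

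For the claim about Algorithm~\ref{alg:opt-w-oracle}, I would additionally invoke uniform convergence on the margin-density class $\{\mden_{\vec x}^{1/c}(g)\}_{g\in \Gth}$, which by Lemma~\ref{lem:pseudo-mech} has VC dimension $O(\mathrm{rank}(P))$ and range $[0,1]$; a sample size of the same order suffices to make empirical and true margin densities uniformly close to within a suitably rescaled error. I would then run the algorithm with the oracle instantiated on the empirical distribution $\hat\D$ defined by $S$. Each oracle call now returns a halfspace that is only approximately optimal in true margin density, but since the oracle's density-optimality is the only distribution-dependent input to Lemma~\ref{lem:threshold:1/4-approx}, this only degrades the $1/4$-approximation additively. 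The final $\argmax$ over the finite set $\C$ is based on empirical gain, so the first uniform-convergence bound ensures the selected $\hat g$ is within $\epsilon/2$ of the truly best candidate in $\C$, giving $\gain(\hat g)\ge \tfrac14 \gain(g_\opt)-\epsilon$ after rescaling the target accuracy by a constant.

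The main obstacle is propagating the oracle's approximation error through the proof of Theorem~\ref{thm:main-g-threshold}: on $\hat\D$ the oracle is no longer an exact density maximizer on $\D$, so the inequalities in Lemma~\ref{lem:threshold:1/4-approx} must be tracked with slack, and one must check that the discretization in Lemma~\ref{lem:threshold:discretize} and the Lipschitzness in Lemma~\ref{lem:threshold:soft-margin-lipschitz} still apply to the empirically-selected halfspaces. Because $\eta\le 1$ and the soft-margin factor is bounded by $1/c$, these degradations are multiplicative in controllable quantities and can be absorbed into $\epsilon$; once the right uniform-convergence accuracy is chosen, the remaining work is bookkeeping of constants.
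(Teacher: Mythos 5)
Your proposal matches the paper's argument: the ERM claim is the direct combination of Lemma~\ref{lem:pseudo-mech} and Theorem~\ref{thm:pollard} with $H=1/c$, and the algorithmic claim is handled exactly as you describe, by additionally applying uniform convergence to the margin-density class so that the oracle calls on the empirical sample return near-optimal halfspaces whose error propagates through Lemma~\ref{lem:threshold:1/4-approx} with only $O(\epsilon)$ degradation. The only clarification worth noting is that the discretization and Lipschitzness lemmas are invoked on the true smoothed distribution $\D$ in the analysis (not on the empirical distribution, which is not smooth), so there is no further obstacle beyond the bookkeeping you already identify.
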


\section{Discussion}

In this work, we focused on increasing the expected quality of the population, i.e., welfare, through classification.
There are, however, other reasonable objectives that one may want to consider.
In some cases, our goal may be to design 
evaluation mechanisms that lead to highest quality \emph{accepted} individuals, not accounting for those who are rejected by the mechanism. 
It would be interesting to study these objectives under our model as well. An example of this is to consider, for a set of boolean valued functions $\G$,
$
\max_{g\in \G} \E_{\vec x \sim \D}\big[ f(\delta_g(\vec x))  \mid g(\vec x) = 1 \big].
$ 

Throughout this work, we focused on the $L_2$ cost function (i.e., $\cost(\vec x, \vec x') = c \|\vec x - \vec x'\|_2$).
This specification models scenarios in which an individual can improve multiple features most efficiently by combining effort
rather than working on each feature individually ($L_1$ norm).  
For example, simultaneously improving  writing, vocabulary, and critical analysis of a student more by for example reading novels, is more effective than spending effort to improve 
vocabulary by, say, memorizing vocab lists, and then improving the other attributes.
It would be interesting to analyze the algorithmic aspects of alternative cost functions, such as $L_1$ cost or even
non-metric costs 
(e.g., ones with learning curves whereby the first $10\%$ improvement is cheaper than the next $10\%$), 
and different costs for different types of individuals.

Finally, we have assumed we know the true mapping of features to qualities (i.e., $f$).  In many settings, one might not know this mapping, or even the full set of features.  Instead, the designer only observes the quality of individuals after they respond to their incentives (i.e., $(f(\delta_g(\vec x)))$), and the projection of their new feature set (i.e., $P \delta_g(\vec x)$).  
Existing works on Stackelberg games and strategic classification have considered the use of learning tools for designing optimal mechanisms without the knowledge of the underlying function~\citep{BHP14,HaghtalabFNSPT16,miller2019strategic}.
It would be interesting to characterize how the nature of observations and interventions available in our setting specifically affects this learning process.

\bibliographystyle{plainnat}
\bibliography{Nika_bib}

\onecolumn
\appendix

\section{Useful Properties of Smoothed Distributions}
\label{sec:app:smooth}

\begin{lemma} \label{lem:smooth-band}
For any $\sigma$-smoothed distribution $\D$ and any unit vector $\vec w$ and any range $[a, b]$,
\[ \Pr_{\vec x\sim \D}[\vec w\cdot \vec x \in [a,b] ]\leq \frac{|b-a|}{\sigma\sqrt{2\pi}}.
\]
\end{lemma}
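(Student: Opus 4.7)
My plan is to condition on the pre-noise sample and reduce to a one-dimensional Gaussian density bound. Recall that $\vec x \sim \D$ is generated by first drawing $\vec y \sim \P$ (supported on $\X$) and then setting $\vec x = \vec y + \vec \eta$ with $\vec \eta \sim N(0, \sigma^2 I)$. Since $\vec w$ is a unit vector, $\vec w \cdot \vec \eta$ is a univariate Gaussian with mean $0$ and variance $\vec w^\top (\sigma^2 I) \vec w = \sigma^2$. Thus, conditional on $\vec y$, the random variable $\vec w \cdot \vec x = \vec w \cdot \vec y + \vec w \cdot \vec \eta$ is distributed as $N(\vec w \cdot \vec y,\, \sigma^2)$.

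The key step is then the standard fact that the density of $N(\mu, \sigma^2)$ is pointwise at most $\tfrac{1}{\sigma \sqrt{2\pi}}$. Integrating this uniform upper bound over the interval $[a, b]$ gives
\[
\Pr_{\vec x \sim \D}\bigl[\vec w \cdot \vec x \in [a,b] \,\bigm|\, \vec y\bigr] \;\leq\; \frac{|b-a|}{\sigma \sqrt{2\pi}},
\]
and this bound holds uniformly in $\vec y$.

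Finally, I would take expectation of this inequality over $\vec y \sim \P$ via the tower property, which eliminates the conditioning and yields the unconditional bound stated in the lemma. I expect no real obstacle here: the entire argument is the observation that smoothing by an isotropic Gaussian gives the marginal of $\vec w \cdot \vec x$ a conditional density bounded by the peak of a standard deviation-$\sigma$ Gaussian, a bound that is independent of the base distribution $\P$.
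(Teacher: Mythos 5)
Your proof is correct and follows essentially the same approach as the paper: both condition on the pre-noise point, observe that $\vec w \cdot \vec x$ is then a one-dimensional Gaussian with variance $\sigma^2$, and bound the probability of landing in $[a,b]$ by the peak density $\tfrac{1}{\sigma\sqrt{2\pi}}$ times the interval length. The only cosmetic difference is that you invoke the pointwise density bound directly, whereas the paper first argues the integral over $[a,b]$ is maximized when the Gaussian is centered at the midpoint and then bounds that integral; both yield the same inequality.
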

\begin{proof}
First note that for any Gaussian distribution with variance $\sigma^2 I$ over $\vec x$, the distribution of $\vec w\cdot \vec x$ is a $1$-dimensional Gaussian with variance $\sigma^2$. Since any $\sigma$-smoothed distribution is a mixture of many Gaussians, distribution over $\vec w \cdot \vec x$ is also the mixture of many Gaussians. It is not hard to see that the density of any one dimensional Gaussian is maximized in range $[a, b]$ if its centered at $(a-b)/2$. Therefore,
\begin{align*}
\Pr_{\vec x\sim \D}[\vec w\cdot \vec x \in [a,b]] 
 & \leq  \Pr_{x\sim N(0, \sigma^2)} \left[ \frac{a-b}{2} \leq x \leq \frac{b-a}{2} \right] \\
 & \leq  \frac{2}{\sigma \sqrt{2\pi}} \int_0^{(b-a)/2} \exp\left(-\frac{z^2}{2\sigma^2}\right) ~dz\\
 & \leq  \frac{(b-a)}{\sigma \sqrt{2\pi}}.
\end{align*} 
\end{proof}

\begin{lemma} \label{lem:smooth-far}
For any distribution over $\{ \vec x\mid \| \vec x\| \leq r\}$ and its corresponding $\sigma$-smoothed distribution $\D$, and $H \geq 2r + \sigma \sqrt{2 \ln (1/\epsilon)}$, we have
\[ \Pr_{\vec x\sim \D}\big[ \| \vec x\| \geq H \big] \leq \epsilon. 
\]
\end{lemma}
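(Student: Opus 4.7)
The plan is to prove this tail bound by combining the definition of a $\sigma$-smoothed distribution with a standard Gaussian concentration estimate. First I would use the definition to decompose $\vec x \sim \D$ as $\vec x = \vec y + \vec z$, where $\vec y \sim \P$ is supported on $\X = \{\vec u : \|\vec u\| \leq r\}$ and $\vec z \sim N(0, \sigma^2 I)$ is independent. By the triangle inequality together with $\|\vec y\| \leq r$, we have $\|\vec x\| \leq r + \|\vec z\|$, so the event $\|\vec x\| \geq H$ is contained in the event $\|\vec z\| \geq H - r$. Plugging in the hypothesis $H \geq 2r + \sigma\sqrt{2\ln(1/\epsilon)}$ reduces the problem to bounding $\Pr[\|\vec z\| \geq r + \sigma\sqrt{2\ln(1/\epsilon)}]$.

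The second step is to bound this Gaussian tail. Since $\|\vec z\|$ is a $1$-Lipschitz function of $\vec z$ in the Euclidean metric, Gaussian concentration yields $\Pr[\|\vec z\| \geq \E\|\vec z\| + t] \leq \exp(-t^2/(2\sigma^2))$. Using the bound $\E\|\vec z\| \leq r$ (which holds in the paper's standing regime $\sigma \in O(r)$ where the noise scale is controlled by the support radius) and taking $t = \sigma\sqrt{2\ln(1/\epsilon)}$, the right-hand side becomes at most $\epsilon$, giving the claim. Alternatively, if the intended use case is after reducing to a fixed direction $\vec w$ as in Lemma~\ref{lem:smooth-band}, then $\vec w \cdot \vec z \sim N(0, \sigma^2)$ is one-dimensional and one obtains the same $\epsilon$ bound directly from the Mills-ratio estimate $\Pr[|z| \geq \sigma\sqrt{2\ln(1/\epsilon)}] \leq \epsilon$.

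The main obstacle, or rather the only nontrivial point, is justifying $\E\|\vec z\| \leq r$. In full generality $\E\|\vec z\| \approx \sigma\sqrt{n}$ and so depends on the ambient dimension $n$; the bound we want is effectively enforced by the paper's choice to keep $\sigma$ small relative to $r$, or by interpreting the statement after projecting the noise onto a lower-dimensional (e.g.\ one-dimensional) subspace as is done elsewhere in the appendix. Once this calibration step is in place, the rest of the proof is a two-line combination of the triangle inequality and the Gaussian tail bound.
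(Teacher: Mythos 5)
Your approach is essentially the paper's: decompose $\vec x = \vec y + \vec z$ with $\vec y \sim \P$ supported on the ball of radius $r$ and $\vec z \sim N(0,\sigma^2 I)$, apply the triangle inequality, and then appeal to a Gaussian tail bound. The paper's version is slightly looser in step one (it uses $\|\vec x - \vec c\| \geq H - 2r$ where you get $H-r$) and then writes
$\Pr_{\vec x\sim \D}[\|\vec x\| \geq H] \leq \Pr_{x\sim N(0,\sigma^2)}[x \geq H - 2r] \leq \exp\!\left(-\frac{(H-2r)^2}{2\sigma^2}\right)$.

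The concern you raise at the end is genuine, and in fact it is a gap in the paper's own argument, which you have flagged more carefully than the paper does. The paper silently passes from the $n$-dimensional event $\|\vec x - \vec c\| \geq H-2r$ under $N(\vec c,\sigma^2 I)$ to a one-dimensional tail $\Pr_{x \sim N(0,\sigma^2)}[x \geq H-2r]$. This inequality is false in general: $\|\vec z\|$ for $\vec z \sim N(0,\sigma^2 I_n)$ concentrates near $\sigma\sqrt{n}$, so if $H - 2r \leq \sigma\sqrt{n}$ the left-hand side is close to $1$ while the right-hand side can be tiny. Your proposed repair via the Lipschitz concentration bound $\Pr[\|\vec z\| \geq \E\|\vec z\| + t] \leq \exp(-t^2/(2\sigma^2))$ is the right tool and makes the missing hypothesis explicit: one needs $\E\|\vec z\| \approx \sigma\sqrt{n} \lesssim r$, which is stronger than the paper's stated $\sigma \in O(r)$ unless the ambient dimension is treated as a constant. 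So your two-step argument matches the paper's, but you have correctly localized the place where the stated assumptions do not fully justify the claim in arbitrary dimension; both proofs implicitly require the total noise magnitude, not just its per-coordinate scale, to be controlled by $r$.
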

\begin{proof}
Consider the Gaussian distribution $N( \vec c, \sigma^2I)$ for some $\| \vec c\| \leq r$. Then for any $\|\vec x\| \geq H$, it must be that the distance of $\vec x$ to $\vec c$ is at least $H - 2r$. Therefore,
\begin{align*}
\Pr_{\vec x\sim \D}[\|\vec x\| \geq H] \leq 
\Pr_{x\sim N(0, \sigma^2)}[x \geq H - 2r] \leq \exp\left( - \frac{(H- 2r)^2}{2\sigma^2}  \right) \leq \epsilon.
\end{align*}
\end{proof}

\begin{lemma} \label{lem:smooth-den/den}
For any distribution over $\{ \vec x\mid \| \vec x\| \leq r\}$ and let $p(\cdot)$ denote the density of the corresponding $\sigma$-smoothed distribution. Let $\vec x$ and $\vec x'$ be such that $\| \vec x - \vec x'\| \leq D$. Then, 
\[ \frac{p(\vec x)}{p( \vec x')} \leq  \exp\left( \frac{D (2 r + \|\vec x\| + \|\vec x'\|)}{2 \sigma^2} \right).
\]
\end{lemma}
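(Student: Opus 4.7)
The plan is to exploit the convolution structure of a $\sigma$-smoothed distribution and bound the likelihood ratio pointwise before integrating. Write the density as
\[
p(\vec x) \;=\; \int_{\|\vec y\|\leq r} \P(\vec y)\,\phi_\sigma(\vec x - \vec y)\,d\vec y,
\]
where $\phi_\sigma$ is the density of $N(0,\sigma^2 I)$ and $\P$ is the base distribution supported on the ball of radius $r$. The same identity holds for $p(\vec x')$. The idea is to show that the ratio $\phi_\sigma(\vec x - \vec y)/\phi_\sigma(\vec x' - \vec y)$ admits a bound that is \emph{uniform} in $\vec y$ on the support of $\P$, so this bound can be pulled out of the integral defining $p(\vec x)$.

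For each fixed $\vec y$ with $\|\vec y\|\leq r$, the Gaussian ratio simplifies to
\[
\frac{\phi_\sigma(\vec x - \vec y)}{\phi_\sigma(\vec x' - \vec y)}
\;=\;\exp\!\left(\frac{\|\vec x' - \vec y\|^2 - \|\vec x - \vec y\|^2}{2\sigma^2}\right).
\]
Expanding the difference of squared norms as the polarization identity
\[
\|\vec x' - \vec y\|^2 - \|\vec x - \vec y\|^2 \;=\; (\vec x + \vec x' - 2\vec y)\cdot (\vec x' - \vec x),
\]
Cauchy--Schwarz together with $\|\vec x' - \vec x\| \leq D$ and the triangle inequality $\|\vec x + \vec x' - 2\vec y\| \leq \|\vec x\| + \|\vec x'\| + 2\|\vec y\| \leq \|\vec x\| + \|\vec x'\| + 2r$ yields
\[
\|\vec x' - \vec y\|^2 - \|\vec x - \vec y\|^2 \;\leq\; D\bigl(2r + \|\vec x\| + \|\vec x'\|\bigr),
\]
a bound that does not depend on $\vec y$ at all (this is precisely where the bounded-support assumption $\|\vec y\|\leq r$ is used).

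Since the resulting factor is uniform in $\vec y$ over the support of $\P$, I can pull it out of the integral:
\[
p(\vec x) \;\leq\; \exp\!\left(\frac{D(2r + \|\vec x\| + \|\vec x'\|)}{2\sigma^2}\right)\!\!\int \P(\vec y)\,\phi_\sigma(\vec x' - \vec y)\,d\vec y,
\]
and the remaining integral is exactly $p(\vec x')$, giving the claim. There is no real obstacle here; the only subtle point is recognizing that applying the triangle inequality to $\|\vec x + \vec x' - 2\vec y\|$ (rather than, say, to individual terms like $\|\vec x - \vec y\|$) is what produces the tight exponent $D(2r + \|\vec x\|+\|\vec x'\|)/(2\sigma^2)$ stated in the lemma.
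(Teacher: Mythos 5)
Your proof is correct and takes essentially the same approach as the paper: bound the pointwise ratio of Gaussian densities uniformly over the mixture center $\vec y$ in the support of the base distribution, then integrate. The only cosmetic difference is that you expand $\|\vec x'-\vec y\|^2 - \|\vec x-\vec y\|^2$ via the polarization identity and Cauchy--Schwarz, whereas the paper factors it as $(\|\vec c-\vec x'\|-\|\vec c-\vec x\|)(\|\vec c-\vec x'\|+\|\vec c-\vec x\|)$ and applies the (reverse) triangle inequality to each factor; both yield the identical exponent, and your final step of pulling the uniform bound through the convolution integral is stated slightly more explicitly than the paper's ``summing over the density for all these Gaussians.''
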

\begin{proof}
Consider distribution $N(\vec c, \sigma^2 I)$ for some $\|\vec c\| \leq r$.
Without loss of generality assume that $\| \vec c - \vec x\| \leq \| \vec c - \vec x'\|$, else $ p(\vec x)/p( \vec x') \leq 1$. 
We have
\begin{align*}
\frac{p(\vec x)}{p(\vec x')}
	& = \frac{\exp\left( \frac{-\| \vec c - \vec x\|^2}{2\sigma^2} \right) }{\exp\left( \frac{-\| \vec c - \vec x'\|^2}{2\sigma^2} \right) } \\
	& = \exp\left(\frac{\| \vec c - \vec x'\|^2 - \| \vec c - \vec x\|^2 }{2\sigma^2}    \right) \\
	& = 	\exp\left(\frac{\left(\| \vec c - \vec x'\| - \| \vec c - \vec x\|\right) \left(\| \vec c - \vec x'\| + \| \vec c - \vec x\| \right) }{2\sigma^2}    \right) \\
	& \leq \exp\left( \frac{D (2 r + \|\vec x\| + \|\vec x'\|)}{2 \sigma^2} \right).
\end{align*}

Since $\D$ is a $\sigma$-smoothed distribution, it is a mixture of many Gaussians with variance $\sigma^2 I$ centered within $L_2$ ball of radius $r$. Summing over the density for all these Gaussian proves the claim. 
\end{proof}

\section{Proofs from Section~\ref{sec:linear-g}}
\label{app:proof_linear}

\begin{lemma} \label{lem:g_linear:delta}
For any linear $g(\vec x) = \vec w_g \cdot \vec x - b_g$ and cost function $\cost(\vec x, \vec x') = c \|\vec x - \vec x'\|_2$, 
\begin{align*}
   \delta_g(\vec x) =  \vec x + \alpha \vec w_g,
\end{align*} 
where $\alpha = 0$ if $c > \| \vec w_g \|$ and $\alpha = \infty$ if $c \leq \| \vec w_g \|$. 
\end{lemma}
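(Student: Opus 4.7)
}
The plan is to reduce the best-response problem to a one-dimensional optimization in the magnitude of the displacement. Writing $\vec x' = \vec x + \vec v$, the utility decomposes as
\[
U_g(\vec x, \vec x + \vec v) = \vec w_g \cdot \vec x - b_g + \vec w_g \cdot \vec v - c\|\vec v\|_2,
\]
so maximizing over $\vec x'$ is equivalent to maximizing $\vec w_g \cdot \vec v - c\|\vec v\|_2$ over $\vec v \in \R^n$. The first term $\vec w_g \cdot \vec x - b_g$ is a constant independent of $\vec v$ and can be dropped.

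Next I would fix $\|\vec v\|_2 = t \geq 0$ and apply Cauchy--Schwarz: $\vec w_g \cdot \vec v \leq \|\vec w_g\|_2 \cdot t$, with equality attained (assuming $\vec w_g \neq \vec 0$) by the unique choice $\vec v = t \vec w_g / \|\vec w_g\|_2$. Substituting the Cauchy--Schwarz optimum back in reduces the objective to the scalar function $\phi(t) = (\|\vec w_g\|_2 - c)\,t$ on $t \geq 0$. This makes the direction of the optimal displacement automatic: $\delta_g(\vec x) - \vec x$ must be proportional to $\vec w_g$, which justifies writing $\delta_g(\vec x) = \vec x + \alpha \vec w_g$ for a scalar $\alpha \geq 0$.

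The final step is a routine case split on the slope of $\phi$. If $c > \|\vec w_g\|_2$, then $\phi$ is strictly decreasing in $t$, so the unique maximizer is $t = 0$, giving $\alpha = 0$ and $\delta_g(\vec x) = \vec x$. If $c < \|\vec w_g\|_2$, then $\phi$ is strictly increasing and unbounded above, so the supremum is attained only in the limit $t \to \infty$, which we encode as $\alpha = \infty$. The boundary case $c = \|\vec w_g\|_2$ makes $\phi \equiv 0$, so every $t$ is optimal; taking $\alpha = \infty$ is consistent (and the natural tie-breaking convention since the lemma groups this case with $c \leq \|\vec w_g\|_2$). The degenerate subcase $\vec w_g = \vec 0$ falls under $c > \|\vec w_g\|_2$ and yields $\alpha = 0$, as expected.

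I do not anticipate a substantive obstacle: the argument is a direct Cauchy--Schwarz plus one-variable linear optimization. The only mild subtlety is handling the unbounded case and the knife-edge $c = \|\vec w_g\|_2$, which I would address by the tie-breaking convention above rather than as a separate analytical issue.
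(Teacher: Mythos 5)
Your proposal is correct and follows essentially the same route as the paper: both reduce to a one-dimensional optimization by showing the optimal displacement lies along $\vec w_g$ (you via Cauchy--Schwarz on a fixed-norm shell $\|\vec v\| = t$, the paper via an explicit orthogonal decomposition $\vec x' - \vec x = \alpha \vec w_g + \vec z$ with $\vec z \perp \vec w_g$), and then argue by the sign of the slope of the resulting linear scalar objective. Your write-up is in fact a bit cleaner on the knife-edge case $c = \|\vec w_g\|$, where every $t$ is optimal and $\alpha = \infty$ is a tie-breaking convention, and on the degenerate $\vec w_g = \vec 0$ case, both of which the paper's proof glosses over.
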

\begin{proof}

Consider an image of a vector on $\vec w_g$ and its orthogonal space, note that any $\vec x'$ can be represented by $\vec x' - \vec x = \alpha \vec w_g + \vec z,$
where $\vec z\cdot \vec w_g = 0$ and $\alpha \in \R$. Then, the payoff that a player with features $\vec x$ gets from playing $\vec x'$ is
\begin{align*}
U_g (\vec x, \vec x') &= \vec w_g \cdot \vec x' - b_g - c\| \vec x - \vec x'\| \\
	&= \vec w_g\cdot \left( \alpha \vec w_g + \vec z \right) - b_g - c\|   \alpha \vec w_g + \vec z \|  \\
	& = \alpha \| \vec w_g \|^2 - b_g - \alpha c \|\vec w_g\| - |\alpha|c \|\vec z\|, 
\end{align*}
where the last transition is due to the fact that $\vec z$ is orthogonal to  $\vec w_g$.

It is clear from the above equation that the player's utility is optimized only if $\vec z = \vec 0$ and $\alpha \geq 0$, that is, $\delta_g(\vec x)  = \vec x + \alpha \vec w_g$ for some $\alpha \geq 0$. Therefore,
\[ \delta_g(\vec x) = \argmax_{\vec x'}~ \vec w_g \cdot \vec x - b_g - c\|\vec x - \vec x'\| = \vec x + \left(\argmax_{\alpha\geq 0} ~ \alpha \|\vec w_g\| \left( \| \vec w_g \| - c \right) \right) \vec w_g.
\]
Note that $\alpha \|\vec w_g\| \left( \| \vec w_g \| - c \right)$ is maximizes at $\alpha = 0$ if $c > \| \vec w_g \|$ and is maximized at $\alpha = \infty$ if $c \leq \| \vec w_g \|$.
\end{proof}

\subsection{Proof of Theorem~\ref{thm:linear}}

For ease of exposition we refer to parameters of $g_\opt$ by $\vec w^*$ and $b^*$.
Let $\vec w_g = \frac{(P\vec w_f)R}{\|P\vec w_f\|_2}$ and $g(\vec x) = \vec w_g \cdot \vec x$.
By the choice of $\vec w_g$ and definition of $\G$,  we have  $\|\vec w_g\| = R \geq \| \vec w^* \|$ and $\vec w_g \in \mathrm{Img}(P)$.

By Lemma~\ref{lem:g_linear:delta} and the fact that $ \|\vec w_g \| \geq  \| \vec w^* \|$ there are $\alpha_g, \alpha_{g_\opt}\in\{0, \infty\}$,  such that 
$\alpha_{g} \geq \alpha_{g_\opt}$,
$\delta_{g_\opt}(\vec x) = \vec x + \alpha_{g_\opt} \vec w_g$, and $\delta_{g}(\vec x) = \vec x + \alpha_{g}\vec w_g.$
Furthermore, we have 
\begin{align*}
 \vec w_f \cdot \vec w^* \leq  \| \vec w_f\| \|\vec w^*\| \leq  \| \vec w_f\| R = \vec w_f\cdot \frac{R(P \vec w_f)}{\|P \vec w_f\|}  =  \vec w_f \cdot \vec w_g,
\end{align*}
where the first transition is by Cauchy-Schwarz, the second transition is by the definition of $\G$, and the last two transitions are by the definition of $\vec w_g$.
Using the monotonicity of $h$ and the fact that $\alpha_g\geq \alpha_{g_\opt}$ and $\vec w_f \cdot \vec w_g \geq  \vec w_f \cdot \vec w^*$,  we have
\begin{align*}
\E_{\vec x \sim \D} \big[ f(\delta_{g_\opt}(\vec x) \big] 
	&=     \E_{\vec x \sim \D}\big[ h\left( \vec w_f\cdot \vec x + \alpha_g  \vec w_f \cdot \vec w^* - b_f \right) \big] \\
	& \leq \E_{\vec x \sim \D}\big[ h\left( \vec w_f\cdot \vec x + \alpha_{g_\opt}  \vec w_f \cdot \vec w_g - b_f \right) \big]\\
	& = \E_{\vec x \sim \D}\big[ f(\delta_{g}(\vec x)) \big].
\end{align*}
This proves the claim.

\section{Proofs from Section~\ref{sec:threshold-g}}
\label{app:threshold:proofs}

\subsection{Proof of Lemma~\ref{lem:def_delta}}
By definition, the closest $\vec x'$ to $\vec x$ such that $g(\vec x') = 1$ is the projection of $\vec x$ on $g$, defined by $\vec x'  = \vec x - (\vec w\cdot \vec x - b) \vec w$. By definition 
\[
\cost(\vec x, \vec x') = c\|\vec x - \vec x'\|_2 = c \left(\vec w\cdot \vec x - b\right).
\]
The claim follows by noting that a player with features  $\vec x$ would report $\delta_g(\vec x) = \vec x'$ only if $\cost(\vec x, \vec x')\leq 1$, otherwise $\delta_g(\vec x) = \vec x$.

\subsection{Proof of Lemma~\ref{lem:threshold:discretize}}
\label{app:threshold:proof:discretize}

The proof idea is quite simple: Only a small move in angle is required to achieve an $\epsilon$ change in the projection of on $\vec w_2$. Here, we prove this rigorously.  Let $\vec w = \alpha \vec w_1 + \beta \vec w_2$ for $\alpha$ and $\beta$ that we will described shortly. Note that $\mathrm{Img}(P)$ includes any linear combination of $\vec w_1$ and $\vec w_2$. Therefore, $\vec w\in P$. Let $\omega = \vec w_1 \cdot \vec w_2$, 
\[ \alpha = \sqrt{\frac{1 - (\omega + \epsilon)^2}{1 - \omega^2}},\]
and 
\[ \beta = \sqrt{1 - \alpha^2 + \alpha^2 \omega^2} - \alpha \omega.\]

By definition of $\alpha$ and $\beta$, $\| \vec w\| = \sqrt{\alpha^2 + \beta^2 + 2\alpha \beta \omega}  = 1$.
That, is $\vec w$ is indeed a unit vector.
Next, we show that $\vec w \cdot \vec w_2 = \vec w_1 \cdot \vec w_2 + \epsilon$. We have
\begin{align*}
\vec w \cdot \vec w_2 &= \alpha \omega + \beta = \sqrt{\frac{\omega ^2 \left(1-(\epsilon+\omega)^2\right)}{1-\omega^2}-\frac{1-(\epsilon+\omega)^2}{1-\omega^2}+1}  = \omega + \epsilon = \vec w_1 \cdot \vec w_2 + \epsilon.
\end{align*}

It remains to show that $\vec w_1 \cdot \vec w \geq 1-\epsilon$. By definition of $\vec w$, we have
\begin{align*}
\vec w\cdot \vec w_1 &= \alpha + \beta \omega  =  \left(1-\omega^2\right) \sqrt{\frac{(\epsilon+\omega)^2-1}{\omega^2-1}}+\omega (\epsilon+\omega). \\
\end{align*}
One can show that the right hand-side equation is monotonically decreasing in $\omega$
Since $\omega < 1 - 2\epsilon$, we have that 
\begin{align*}
\vec w\cdot \vec w_1 
	&\geq  \left(1-(1- 2\epsilon)^2\right) \sqrt{\frac{1 -(1- \epsilon)^2}{1- (1- 2\epsilon)^2}}+ (1- 2\epsilon) (1- \epsilon) \\
	& =  (1-\epsilon) \left(2 \epsilon \left(\sqrt{\frac{2-\epsilon}{1 - \epsilon}}-1\right)+1\right)\\
	&\geq 1- \epsilon.
\end{align*}
This completes the proof.

\subsection{Proof of Lemma~\ref{lem:threshold:soft-margin-lipschitz}}
\label{app:threshold:proof:lipschitz}

Without loss of generality, let the angle between $\vec w_1$ and $\vec w_2$ to be exactly $\beta = \epsilon \sigma^2$. Also without loss of generality assume $b_1 > b_2$ and let $\vec w_1 = (1, 0, \dots, 0)$ and $\vec w_2 = (\cos(\beta), \sin(\beta), 0, \dots, 0)$. Consider the following rotation matrix
\[  M = 
\begin{pmatrix}	
	\cos(\beta)  & \sin (\beta) & 0 & \cdots & 0\\
	-\sin(\beta)  & \cos (\beta) & 0 & \cdots & 0 \\
    0             & 0             & 1 & \cdots & 0 \\
    \vdots        & \vdots        &   &\ddots &   \\
    0             & 0             & 0 & \cdots &   1\\
\end{pmatrix}
\]
that rotates every vector by angle $\beta$ on the axis of $\vec w_1, \vec w_2$. Note that $\vec x\rightarrow P\vec x$ is a bijection. Furthermore, for any $\vec x$, $\vec w_1 \cdot \vec x = \vec w_2 \cdot (M\vec x)$ and the $\|\vec x\| = \| M \vec x\|$.

We use the mapping $\vec x\rightarrow P\vec x$ to upper bound the difference between the soft-margin density of $(\vec w_1, b_1)$ and $(\vec w_2, b_2)$. At a high level, we show that the total contribution of $\vec x$ to soft-margin density of $(\vec w_1, b_1)$ is close to the total contribution of $M\vec x$ to the soft-margin density of $(\vec w_2, b_2)$. 

For this to work, we first show that almost all instances $\vec x$ in the $\ell$-margin of $(\vec w_1, b_1)$ translate to instances $M\vec x$ in the $\ell$-margin of $(\vec w_2, b_2)$, and vice versa. That is,
\begin{align*} \label{eq:match-bands1}
\Pr_{\vec x\sim \D}\Big[ \vec w_1\cdot \vec x - b_1 \in [-\ell, 0] \text{ but } \vec w_2\cdot M \vec x - b_2 \notin [-\ell, 0], \|\vec x\|\leq R \Big]  & \leq  \Pr_{\vec x\sim \D}\Big[ \vec w_1\cdot \vec x - b_1 \in [-\sigma \epsilon, 0]  \Big] \\
       & \leq \frac{\epsilon}{\sqrt{2\pi}}, \numberthis
\end{align*}
where the first transition is due to that fact that $\vec w_1\cdot \vec x = \vec w_2 \cdot M\vec x$, so only instances for which  
$\vec w_1\cdot \vec x - b_1 \in [-\sigma \epsilon, 0]$ contribute to the above event. The last transition is by Lemma~\ref{lem:smooth-band}. Similarly, 
\begin{equation}\label{eq:match-bands2}
\Pr_{\vec x\sim \D}\Big[ \vec w_1\cdot \vec x - b_1 \notin [-\ell, 0] \text{ but } \vec w_2\cdot M \vec x - b_2 \in [-\ell, 0], \|\vec x\| \leq R \Big]                        \leq \frac{\epsilon}{\sqrt{2\pi}}.
\end{equation}
Next, we show that the contribution of $\vec x$ to the soft-margin of $(\vec w_1, b_1)$ is close to the contribution of $M\vec x$ to the soft-margin of $(\vec w_2, b_2)$.

\begin{equation} \label{eq:error-b-w.x}
\left| (b_1 - \vec w_1 \cdot  \vec x ) - (b_2 - \vec w_2 \cdot M \vec x) \right| \leq |b_1 - b_2| \leq \sigma \epsilon.
\end{equation}
Moreover, $\vec x$ is also close to $M \vec x$, since $M$ rotates any vector by angle $\beta$ only. That is, for all $\vec x$, such that $\| \vec x\| \leq R$, we have
\begin{align*}
\left\| \vec x - M \vec x \right\| \leq \| \vec x\| \| I - M \|_F  \leq 2\sqrt{2} \sin(\beta/2) R \leq \sqrt{2} \beta R.
\end{align*}

Now consider the density of distribution $\D$ indicated by $p(\cdot)$. By Lemma~\ref{lem:smooth-den/den} and the fact that $\|\vec x\| = \| M \vec x\|$, we have that for any $\| \vec x\| \leq R$, 
\begin{equation} \label{eq:p(x)/p(x')}
 \frac{p(\vec x)}{p(M \vec x)} \leq \exp\left(\frac{\sqrt{2}\beta R \cdot (6r + 2\sigma\sqrt{2\ln(2/\epsilon)})}{2\sigma^2} \right) \leq \exp\left(\frac{3\beta R^2}{\sigma^2}\right) <  1 + 6 \beta \frac{R^2}{\sigma^2} = 1+ 6 \epsilon R^2
\end{equation} 
where the penultimate transition is by the fact that $\exp(x) \leq 1+2x$ for $x< 1$ and $\frac{3\beta R^2}{\sigma^2}\leq 1$ for $\epsilon \leq 1/3R^2$. Lastly, by Lemma~\ref{lem:smooth-far}, all but $\nu$ fraction of the points in $\D$ are within distance $R$ of the origin.
Putting these together, for $\Omega = \left\{\vec x\mid \| \vec x\|\leq R\right\}$, we have
\begin{align*}
\big| & \sden_{\D}(\vec w_1, b_1)  - \sden_{\D}(\vec w_2, b_2) \big| \\
& = 
 \left| \E_{\vec x \sim \D}\Big[ (b_1 - \vec w_1\cdot \vec x) \ind(\vec w_1\cdot \vec x - b_1\in [-\ell, 0])\Big] -  \E_{M\vec x \sim \D}\Big[(b_2 - \vec w_2\cdot M\vec x) \ind(\vec w_2\cdot M\vec x - b_2\in [-\ell, 0])    \Big]    \right| \\
	& \leq  \Pr_{\vec x\sim \D}\left[ \vec x\notin \Omega  \right] + \ell \Pr_{\vec x\sim \D}\Big[ \vec w_1\cdot \vec x - b_1 \in [-\ell, 0] \text{ but } \vec w_2\cdot M \vec x - b_2 \notin [-\ell, 0] \Big] \\
  	&\quad  + \ell \Pr_{M\vec x\sim \D}\Big[ \vec w_1\cdot \vec x - b_1 \notin [-\ell, 0] \text{ but } \vec w_2\cdot M \vec x - b_2 \in [-\ell, 0] \Big] \\
    & \quad + \left| \int_\Omega  \Big(    p(\vec x) (b_1 - \vec w_1\cdot \vec x) - p(M\vec x) (b_2 - \vec w_2\cdot M\vec x)  \Big)~ d\vec x \right| 
    \\
    & \leq \nu + \frac{2 \ell \epsilon}{\sqrt{2\pi}} + 
    2 \int_{\Omega} \max\{ |p(\vec x) - p(M\vec x)|, |(b_1 - \vec w_1\cdot \vec x)  - (b_2 - \vec w_2 \cdot M \vec x)| \} \\
    & \leq  \nu + \frac{2 \epsilon \ell}{\sqrt{2\pi}} + 
   6R^2 \epsilon + \sigma \epsilon\\
   & \leq O(\nu +\epsilon R^2),
\end{align*}
where the last transition is by the assumption that $\sigma \in O(R)$ and $\ell \in O(R)$.

\subsection{Proof of Lemma~\ref{lem:threshold:1/4-approx}}
\label{app:threshold:proof:soft}

We suppress $\D$ and $\ell$ in the notation of margin density and soft-margin density below.
Let $\vec w^*, b^*$ be the optimal solution  to the weighted soft-margin density problem, i.e.,
\[ \left(\vec w^*\cdot \vec w_f\right) \cdot \sden_\D^\ell\left(\vec w^*, b^* \right) = \max_{\eta\in N} \eta \cdot \max_{\substack{\vec w\cdot \vec w_f =\eta \\ b\in \R}} ~ \left(\vec w\cdot \vec w_f \right)\cdot \sden(\vec w, b).  
\]
Note that for any $\vec x\sim \D$ whose contribution to $\sden(\vec w^*, b^*)$ is non-zero, it must be that $\vec w^*\cdot \vec x - b\in [-\ell, 0]$. By definitions of margin and soft margin densities, we have
\begin{equation} \label{eq:proof-soft-margin1}
  \alpha\left( \vec w^*\cdot \vec w_f\right)\cdot  \sden(\vec w^*, b^*) \leq   \alpha \ell \left( \vec w^*\cdot \vec w_f\right) \cdot  \mden(\vec w^*, b^*)  \leq \ell  \left( \hat{\vec w}\cdot \vec w_f\right) \cdot \mden(\hat {\vec w}, \hat b).
\end{equation}
 
Consider the margin density of $(\hat{\vec w}, \hat b)$. There are two cases:
\begin{enumerate}
\item At least half of the points, $\vec x$, in the $\ell$-margin are at distance at least $\ell/2$ from the boundary, i.e., $\hat{\vec w}\cdot \vec x - \hat b \in [-\ell, -\ell/2]$. In this case, we have
\begin{equation}
\label{eq:proof-soft-margin1.2a}
 \left( \frac{1}{2} \right) \left( \frac \ell 2\right) \mden(\hat {\vec w}, \hat b)  \leq \sden(\hat {\vec w}, \hat b).    
\end{equation}
\item At least half of the points, $\vec x$, in the $\ell$-margin are at distance at most $\ell/2$ from the boundary, i.e., $\hat{\vec w}\cdot \vec x - \hat b \in [-\ell/2, 0]$. Note that all such points are in the $\ell$-margin of the halfspace defined by $(\hat{\vec w}, \hat b + \ell/2)$. Additionally, these points are at distance of at least $\ell/2$  from this halfspace. Therefore, 
\begin{equation}
\label{eq:proof-soft-margin1.2b}
\left( \frac{1}{2} \right) \left( \frac \ell 2\right) \mden(\hat {\vec w}, \hat b)  \leq \sden(\hat {\vec w}, \hat b +\ell/2).    
\end{equation}

\end{enumerate}
The claim is proved using Equations~\ref{eq:proof-soft-margin1}, \ref{eq:proof-soft-margin1.2a}, and \ref{eq:proof-soft-margin1.2b}.

\subsection{Proof of Theorem~\ref{thm:main-g-threshold}}

For ease of exposition, let $\epsilon' = \min\{\epsilon^4, \epsilon^2 \sigma^4/r^4 \}$, $\nu = 1/ \exp\left(1/\epsilon \right)$, and $R = 2r + \sigma\sqrt{2\ln(1/\nu)}$.
Let's start with the reformulating the optimization problem as follows.
\begin{align}
\max_{g\in \Gth} \gain(g)= \max_{\substack{\vec w\in \mathrm{Img}(P)\\ b\in \R}} (\vec w\cdot \vec w_f) ~\sden_\D^{1/c}(\vec w, b) = \max_{\eta\in[0,1]} \eta  \max_{\substack{\vec w\in \mathrm{Img}(P)\\ \vec w\cdot \vec w_f = \eta \\ b\in \R}} \sden_\D^{1/c}(\vec w, b).
\label{eq:formulation-eta-01}
\end{align}
Let $(\vec w^*, b^*)$ be the optimizer of Equation~\eqref{eq:formulation-eta-01}.
Next, we show the value of solution $(\vec w^*, b^*)$ can be approximated well by a solution $(\hat{\vec w}, \hat b)$ where $\hat{\vec w}\cdot \vec w_f$ and $b$ belongs to a discrete set of values.
Let $N:=\{0, \epsilon' \|P \vec w_f\|, 2\epsilon' \|P \vec w_f\|\dots, \|P \vec w_f\| \}$. 
Let $\eta^* = \vec w^* \cdot \vec w_f$ and define $i$ such that 
$\eta^*\in (i \epsilon'\|P\vec w_f\|, (i+1) \epsilon' \|P\vec w_f\| ]$.
If $i> \lfloor \frac 1{\epsilon'} \rfloor - 2$, then let $\hat{\vec w} = \frac{P\vec w_f}{\| P \vec w_f\|}$. If  $i \geq \lfloor \frac 1{\epsilon'} \rfloor - 2$, then let $\hat{\vec w}$ be the unit vector defined by Lemma~\ref{lem:threshold:discretize} when $\vec w_1 = \vec w^*$ and $\vec w_2 = \frac{P\vec w_f}{\| P \vec w_f\|}$. Then,
\[
\hat{\vec w} \cdot \vec w_f = j \epsilon' \| P\vec w_f\|, \text{where} j = \begin{cases}
\lfloor\frac 1{\epsilon'} \rfloor & \text{If $i \geq \lfloor \frac 1{\epsilon '} \rfloor - 2$}\\
i + 1 & \text{Otherwise} 
\end{cases}.
\]
By these definitions and Lemma~\ref{lem:threshold:discretize}, we have that $\hat{\vec w} \cdot \vec w^* \geq 1- 2\epsilon' \| P \vec w_f\|$, which implies that
$\theta(\vec w^*, \hat{\vec w}) \leq \sqrt{2\epsilon' \|P \vec w_f\|}$.

Next, we use the lipschitzness of the soft-margin density to show that the soft-margin density of $\vec w^*$ and $\hat{\vec w}$ are close. Let $\hat b$ be the closest multiple of $\epsilon' \| P \vec w_f\|$ to $b$.
Using Lemma~\ref{lem:threshold:soft-margin-lipschitz}, we have that
\begin{align*}
\left| \sden_\D(\vec w^*, b^*) - \sden_\D(\hat{\vec w}, \hat b) \right| \leq O\left( \nu + \frac{\epsilon' \| P \vec w_f\|}{\sigma^2} R^2 \right) \in O\left(\nu + \frac{R^2}{\sigma^2} \sqrt{2\epsilon' \|P \vec w_f\|} \right).
\end{align*}
Note that for any $|a_1-a_2|\leq\epsilon'$ and $|b_1 - b_2| \leq \epsilon'$, $|a_1b_1 - a_2b_2| \leq (a_1 + b_1)\epsilon' + \epsilon'^2$, therefore,
\begin{equation}
\left(\hat{\vec w} \cdot \vec w_f\right) \sden_\D^\ell(\hat{\vec w}, \hat b) \geq 
\left(\vec w^* \cdot \vec w_f\right) \sden_\D^\ell(\vec w^*, b^*) -  O\left(\nu  + \frac{R^2}{\sigma^2} \sqrt{2\epsilon' \|P \vec w_f\|}   \right).
\label{eq:approx-grid}
\end{equation}
By Lemma~\ref{lem:threshold:1/4-approx}, the outcome of the algorithm has the property that
\[
\gain(\bar{\vec w}, \bar{b}) \geq \frac 14 \gain(\hat{\vec w}, \hat{b}) 
\]
Putting this all together, we have
\[
\gain(\bar{\vec w}, \bar{b}) 
\geq 
\max_{g\in \Gth}\gain(g) -   O\left(\nu  + \frac{R^2}{\sigma^2} \sqrt{2\epsilon' \|P \vec w_f\|}   \right)
\]

Now, replacing back $\epsilon' = \min\{\epsilon^4, \epsilon^2 \sigma^4/r^4 \}$, $\nu = 1/ \exp\left(1/\epsilon \right)$, and $R = 2r + \sigma\sqrt{2\ln(1/\nu)}$, we have
\[ \nu  + \frac{R^2}{\sigma^2} \sqrt{2\epsilon' \|P \vec w_f\|} \leq O\left( \epsilon + \frac{r^2}{\sigma^2} \sqrt{\epsilon'} + \frac{\sigma^2}{\sigma^2 \epsilon}  \sqrt{\epsilon'}\right) \leq O\left( \epsilon 
\right)
\]

\section{Proof of Theorem~\ref{thm:sample}}
\label{app:sample}
The first claim directly follows from Lemma~\ref{lem:pseudo-mech}, Theorem~\ref{thm:pollard} and the facts that $\gain_{\vec x}(g) \in [0, \ell]$ and  $\gain(g) = \E_{\vec x\sim\D}[\gain_{\vec x}(g)]$.

For the second claim, note that Algorithm~\ref{alg:opt-w-oracle} accesses distribution $\D$ only in steps that maximize the margin density of a classifier (the calls to oracle $\mathrm{O}(\D, \frac 1c, \K_\eta)$) or maximize the gain of a classifier. By Theorem~\ref{thm:pollard}, Lemma~\ref{lem:pseudo-mech}, and the fact that $\mden(g) = \E_{\vec x\sim\D}[\mden_{\vec x}(g)]$, we have that the density of any outcome of the oracle $\mathrm{O}(S, \frac 1c, \K_\eta)$ is optimal up to $O(\epsilon/c)$ for cost unit $c$.
Lemma~\ref{lem:threshold:1/4-approx} shows that the soft-margin density and the gain is also within a factor $O(\epsilon)$ from the $1/4$ approximation.

\section{Threshold Mechanisms when All Features are Visible}
\label{app:brendan}

Consider the special case where the projection matrix $P$ is the identity matrix (or, more generally, has full rank), and the mechanism is tasked with implementing a desired threshold rule.  In this case, we want to maximize the number of individuals whose features satisfy some arbitrary condition, and the features are fully visible to the mechanism.  Suppose that $f:\R^n \rightarrow \{0,1\}$ is an arbitrary binary quality metric, and the mechanism design space is over all functions $g:\R^n \rightarrow [0,1]$ (even non-binary functions).  Then in particular it is possible to set $g=f$, and we observe that this choice of $g$ is always optimal.

\begin{prop}
\label{prop:fullrank}
For any $f:\R^n \rightarrow \{0,1\}$, if $\G$ is the set of all functions $g:\R^n \rightarrow [0,1]$, then
$f \in \argmax_{g\in \G} \Val(g)$.
\end{prop}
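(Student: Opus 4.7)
The plan is to establish the pointwise inequality $f(\delta_f(\vec x)) \geq f(\delta_g(\vec x))$ for every $\vec x$ and every $g \in \G$, which then integrates via \eqref{eq:exp-quality} to give $\Val(f) \geq \Val(g)$. Since $f$ is binary-valued, it suffices to show that whenever $f(\delta_g(\vec x)) = 1$, we also have $f(\delta_f(\vec x)) = 1$.

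First I would characterize $\delta_f$ explicitly. Because $f$ takes values in $\{0,1\}$ and the cost is $c\|\vec x - \vec x'\|_2$, an agent's best response under mechanism $f$ reduces to a binary choice: stay (utility $f(\vec x)$) or move to a nearest point $\vec y^\star$ with $f(\vec y^\star)=1$ (utility $1 - c\|\vec x - \vec y^\star\|$), since any movement to an $f=0$ point is strictly dominated by staying. It follows that $f(\delta_f(\vec x)) = 1$ exactly when $\vec x$ lies within distance $1/c$ of the set $\{\vec x' : f(\vec x') = 1\}$, using the standard convention that ties in the $\argmax$ defining $\delta_f$ are broken in favor of higher true quality.

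Next I would show that the event $f(\delta_g(\vec x))=1$ itself forces $\vec x$ into that same $1/c$-neighborhood of $\{f=1\}$. Writing $\vec y = \delta_g(\vec x)$ with $f(\vec y)=1$, the best-response condition under $g$ gives
\[
g(\vec y) - c\|\vec x - \vec y\| \;\geq\; g(\vec x),
\]
and since $g$ is valued in $[0,1]$ we obtain $c\|\vec x - \vec y\| \leq g(\vec y) - g(\vec x) \leq 1$. Hence $\vec x$ is within distance $1/c$ of an $f=1$ point, so the previous paragraph yields $f(\delta_f(\vec x))=1$. Integrating over $\vec x \sim \D$ completes the argument.

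The only delicate point will be the boundary case $c\|\vec x - \vec y\| = 1$, where under $f$ the agent is genuinely indifferent between moving to $\vec y$ and staying at an $f=0$ point; the tie-breaking convention above handles it and is implicit elsewhere in the paper (e.g.\ in Lemma~\ref{lem:def_delta}). It is worth noting that the argument uses the range $[0,1]$ of $g$ only through the estimate $g(\vec y) - g(\vec x) \leq 1$, which explains conceptually why matching the scale of $f$ is precisely what makes $f$ itself optimal within $\G$.
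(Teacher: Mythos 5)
Your proof is correct and rests on the same key observation as the paper's: whenever an agent responds to some $g$ by moving into $\{f=1\}$, its cost is at most $1$ (since $g$ is $[0,1]$-valued), so it lies within cost-$1$ of $\{f=1\}$ and would also be moved into $\{f=1\}$ by $g=f$ itself. The paper packages this as an aggregate upper bound $\Val(g)\leq\Phi_f=\Val(f)$ via the set $\bar S_f$, whereas you state the equivalent pointwise dominance $f(\delta_f(\vec x))\geq f(\delta_g(\vec x))$ and integrate; you also handle the cost-exactly-$1$ boundary more explicitly via the tie-breaking convention, which the paper sidesteps by using a strict inequality in the definition of $\bar S_f$.
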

\begin{proof}
Fix $f$, and write $S_f = \{ \vec x \in \R^n \ |\  f(\vec x) = 1 \}$ for the set of feature vectors selected by $f$.  Let $\bar{S}_f = \{ \vec x \in \R^n \ |\ \exists \vec y \text{ s.t. } f(\vec y) = 1, \cost(x,y) < 1/c \}$.  That is, $\bar{S}_f$ contains all points of $S_f$, plus all points that lie within distance $1/c$ of any point in $S_f$.  Write $\Phi_f = \Pr_{\vec x \sim \D}\big[ \vec x \in \tilde{S}_f \big]$.  Then note that $\Phi_f$ is an upper bound on $\Val(g) = \E_{\vec x \sim \D}\big[ f(\delta_g(\vec x)) \big]$ for any $g$, since $\Phi_f$ is the probability mass of all individuals who could possibly move their features to lie in $S_f$ at a total cost of at most 1, even if compelled to make such a move by a social planner, and $1$ is the maximum utility that can be gained by any move.

We now argue that setting $g=f$ achieves this bound of $\Phi_f$, which completes the proof.  Indeed, by definition, if $\vec x \in \bar{S}_f$ then there exists some $\vec y \in S_f$ such that $\cost(x,y) < 1$.  So in particular $U_f(\vec x, \vec y) > 0$, and hence $U_f(\vec x, \delta_f(\vec x)) > 0$.  We must therefore have $f(\delta_f(\vec x)) = 1$ for all $\vec x \in \bar{S}_f$, since $U_f(\vec x, \vec x') \leq 0$ whenever $f(\vec x') = 0$.  We conclude that $\Val(f) = \E_{\vec x \sim \D}\big[ f(\delta_f(\vec x)) \big] = \Phi_f$, as required.
\end{proof}

This proposition demonstrates that our problem is made interesting in scenarios where the projection matrix $P$ is not full-rank, so that some aspects of the feature space is hidden from the mechanism.

We note that Proposition~\ref{prop:fullrank} does not hold if we allow $f$ to be an arbitrary non-binary quality metric $f:\R^n \rightarrow [0,1]$.  For example, suppose $f(\vec x) = 0.1$ when $x_1 \geq 0$ and $f(\vec x) = 0$ otherwise, $c = 1$, and $\D$ is the uniform distribution over $[-1,1]^n$. Then setting $g = f$ results in all agents with $x_1 \geq -0.1$ contributing positively to the objective, since agents with $x_1 \in (-0.1, 0)$ will gain positive utility by shifting to $x'_1 = 0$.  However, if we instead set $g$ such that $g(\vec x) = 1$ when $x_1 \geq 0$ and $g(\vec x) = 0$ otherwise, then all agents will contribute positively to the objective, for a strictly greater aggregate quality.

\end{document}